\newcommand{\bydef}{\triangleq}
\newcommand{\tr}{\it{tr}}
\def\bydef{:=}
\def\bb0{{\mathbb{0}}}
\def\bydef{:=}
\def\bb{{\mathbf{b}}}
\def\bn{{\mathbf{n}}}
\def\br{{\mathbf{r}}}
\def\bs{{\mathbf{s}}}
\def\bt{{\mathbf{t}}}
\def\bu{{\mathbf{u}}}
\def\bx{{\mathbf{x}}}
\def\by{{\mathbf{y}}}
\def\bz{{\mathbf{z}}}
\def\b0{{\mathbf{0}}}
\def\bA{{\mathbf{A}}}
\def\bB{{\mathbf{B}}}
\def\bD{{\mathbf{D}}}
\def\bG{{\mathbf{G}}}
\def\bH{{\mathbf{H}}}
\def\bI{{\mathbf{I}}}
\def\bQ{{\mathbf{Q}}}
\def\bU{{\mathbf{U}}}
\def\bV{{\mathbf{V}}}
\def\bbC{{\mathbb{C}}}
\def\bbE{{\mathbb{E}}}
\def\bydef{:=}
\def\sf0{{\mathsf{0}}}
\begin{document}
\newtheorem{thm}{Theorem}
\newtheorem{lemma}{Lemma}
\newtheorem{rem}{Remark}
\def\proof{\noindent\hspace{0em}{\itshape Proof: }}
\def\endproof{\hspace*{\fill}~\QED\par\endtrivlist\unskip}
\def\mapright#1{\smash{\mathop{\le}\limits_{#1}}}
\def\mapequal#1{\smash{\mathop{=}\limits_{#1}}}
\title{Capacity Scaling for MIMO Two-Way Relaying}
\author{Rahul~Vaze and Robert W. Heath Jr. \\
The University of Texas at Austin \\
Department of Electrical and Computer Engineering \\
Wireless Networking and Communications Group \\
1 University Station C0803\\
Austin, TX 78712-0240\\
email: vaze@ece.utexas.edu, rheath@ece.utexas.edu}
\date{}
\maketitle
\noindent
\begin{abstract}
A multiple input multiple output (MIMO) two-way relay channel is considered, where two sources want
to exchange messages with each other using multiple relay nodes, and
both the sources and relay nodes are equipped with multiple
antennas. Both the sources are assumed to have equal number of
antennas and have perfect channel state information (CSI) for all
the channels of the MIMO two-way relay channel, whereas, each relay node is
either assumed to have CSI for its transmit and receive channel
(the coherent case) or no CSI for any of the channels (the non-coherent
case). The main results in this paper are on the scaling behavior of
the capacity region of the MIMO two-way relay channel with
increasing number of relay nodes. In the coherent case, the capacity
region of the MIMO two-way relay channel is shown to scale linearly
with the number of antennas at source nodes and logarithmically with
the number of relay nodes. In the non-coherent case, the capacity
region is shown to scale linearly with the number of antennas at the
source nodes and logarithmically with the signal to noise ratio.

\end{abstract}

\section{Introduction}
Relay channels are the most basic building block for cooperative and multihop communication in wireless networks. In a relay channel, one or more nodes, without data of their own to transmit, help a  source destination pair communicate. The origins of the relay channel - as a three terminal communication channel - go back to Van der Meulen \cite{van}. Despite the passage of time, the capacity of even the most basic relay channels is still unknown. Nonetheless, bounds derived in \cite{van, cover1} show that using a relay, it is possible to increase the reliable rate of data transfer between the source and the destination.

Motivated by the capacity improvements obtained by using multiple antennas at
the source and the destination for point-to-point channels \cite{tel},
recently, there has been a significant research focus on
finding the capacity of the multiple input multiple output (MIMO) relay channel,
where the source, the destination, and the relay may have multiple antennas
\cite{wang, caleb, host}.
The capacity of the MIMO relay channel was first studied in \cite{wang,gupta1},
where upper and lower bounds on the capacity of the MIMO relay channel are
derived for the deterministic and the Gaussian fading channel.
Improved lower bounds for the MIMO relay channel with Gaussian fading
channel were provided by \cite{caleb}, where message splitting and
superposition coding are used at the transmitter to improve the bounds
provided in \cite{wang}.
In \cite{wang,caleb} only full-duplex relays (can transmit and receive at the
same time) were considered. Upper and lower bounds on the capacity for the
more practical Gaussian MIMO relay channel with half-duplex relays,
where the relays cannot transmit and receive at the same time,
were developed in \cite{host}. The bounds in
 \cite{wang, caleb, host} indicate that with relays there is a
potential capacity gain to be leveraged by using multiple antennas.

In \cite{van,cover1,wang,caleb, host} only a single source destination pair
is considered with a single relay node.
For a practical wireless network setting, where there are multiple source
destination pairs,
the concept of cooperative communication has been
recently proposed \cite{sen1,sen2,nabar,lane}, where different users in the
network cooperate by taking turns relaying each others data.
Thanks to the spatial separation between users, cooperation between users
provides a means to obtain and exploit spatial diversity gain, called
cooperative diversity gain, which increases the
achievable data rate between each source and its destination.
Several different protocols have been proposed to exploit the cooperative 
diversity gain, e.g. amplify and forward (AF) \cite{sen1,sen2, nabar, lane,gupta1}, decode and forward (DF) \cite{lane1, valenti}, with half-duplex \cite{sriram},
and full-duplex assumptions \cite{host1}.


Prior work on the relay channel mostly considers one-way communication, i.e.
a source wants to send data to a destination.
In most networks, however, the destination also has some data to send to the
source, e.g. packet acknowledgements from the destination to the source, 
downlink and uplink in cellular networks.
Consequently, there has been interest in the two-way relay channel, where the
bidirectional nature of communication is taken into account
\cite{wit,wit2,hol,tar}.
The two-way relay
channel was studied in \cite{wit2}, where upper and lower bounds
on the capacity region were derived for a general
discrete memoryless channel.

The MIMO two-way relay channel
was introduced in \cite{wit}, where two terminals $T_1$
and $T_2$ want to exchange information with each other through a single relay node
as shown in Fig. \ref{twowayarmin} and both $T_1, T_2$ and the relay node is equipped with
multiple antennas. It was assumed in \cite{wit}
that each node can only work in half-duplex mode
and there is no direct path between $T_1$ and $T_2$.
The communication protocol proposed in \cite{wit} for the MIMO two-way relay channel is as
follows. In the first time slot, both $T_1$ and $T_2$ transmit simultaneously and
the relay node receives the superposition of the signals transmitted by $T_1$ and
$T_2$. In the next time slot, the relay node transmits an amplified version
of the signal, received in the last time slot, to both $T_1$ and $T_2$, subject
to a power constraint. Since both $T_1$ and $T_2$
know what they transmitted in the last time slot, both can remove the effect
of their own signal from the received signal, to decode the other terminal's
message. Thus, the MIMO two-way relay channel facilitates simultaneous communication between
$T_1$ and $T_2$ without creating any self interference.
This idea is reminiscent
of network coding \cite{dina}, though note that here the coding is done in analog domain
rather than in digital domain. The MIMO two-way relay
channel is also known by several other names in the literature, namely, bidirectional
MIMO relay channel \cite{tar} and is also a special case of analog network coding
\cite{dina}.

In prior work, achievable rate region (region enclosed by the
rates achievable on the $T_1 \rightarrow T_2$ and $T_2 \rightarrow
T_1$ links, simultaneously) expressions were derived for the
Gaussian half-duplex MIMO two-way relay channel
(fading coefficients as well as additive noise is Gaussian distributed)
using AF \cite{wit} and DF \cite{hol,tar} at the relay node.
A main conclusion derived in prior work \cite{wit,hol,tar}, is that it is possible to
remove the $\frac{1}{2}$ rate loss factor in spectral efficiency due to the
half-duplex assumption on the relay node.
To the best of our knowledge, none of the
achievable rate region expressions for the MIMO two-way relay channel
meet the best known upper bounds \cite{ger} and therefore the capacity region of the MIMO two-way relay channel is unknown.

\begin{figure}
\centering
\includegraphics[height= 1.5in]{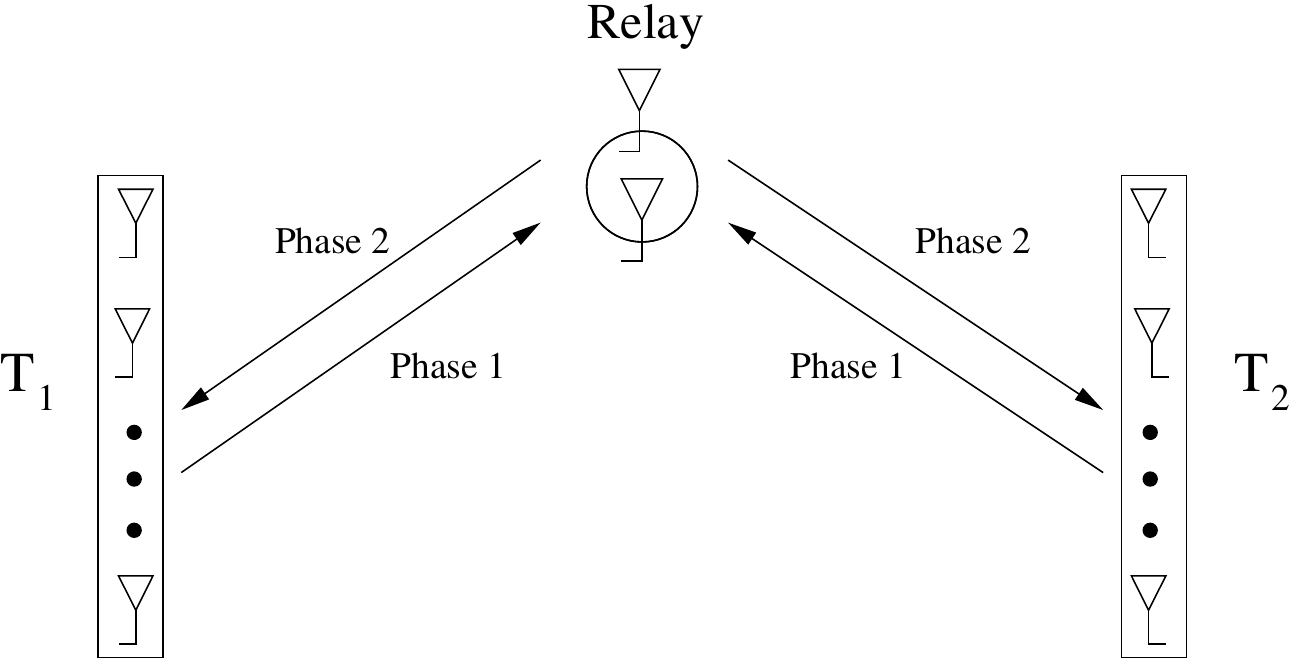}
\caption{Two way relaying Protocol}
\label{twowayarmin}
\end{figure}
%

In this paper we consider a MIMO two-way relay channel with multiple
relay nodes, deriving upper and lower bounds on its capacity
region with different channel state information (CSI) assumptions.
We show that the upper and lower bounds are only
a constant term away, as the number of relays $K$ grows large,
$K \rightarrow$ $\infty$ with probability $1$. Thus, we characterize the
scaling behavior of the capacity region of the MIMO two-way relay channel
as the number of relay nodes grow large.
Our approach is similar to the asymptotic (in the number of relays)
capacity formulation of \cite{gupta, gast, bol}.

Our system model and the key assumptions are as follows.
We assume that two terminals $T_1$ and $T_2$ want to communicate with each other
via $K$ relay nodes. None of the relays have any data of their own and
only facilitate communication between $T_1$ and $T_2$. Both $T_1$
and $T_2$ are equipped with $M$ antennas, while all the $K$ relays
have $N$ antennas each.
We consider a two-phase communication protocol,
where in the any given time slot, for the first  $\alpha, \ \alpha \ \in [0,1]$ fraction of the time slot, both $T_1$ and $T_2$ transmit
simultaneously and all the relays receive. In the rest
$ 1- \alpha$ fraction of the time slot, all the relays simultaneously
transmit and both $T_1$ and $T_2$ receive the signal transmitted by all relays.
We assume that there is no direct path between $T_1$ and $T_2$ and that
$T_1,T_2$ and all the nodes ($T_1$, $T_2$ and all relay nodes) can
only operate in half-duplex mode.
No direct path assumption is reasonable for the case
when relay nodes are used for coverage improvement and the signal strength
on the direct path is very weak. The half-duplex assumption is
made since full-duplex nodes are difficult to realize in practice.
We assume that both $T_1$ and
$T_2$ have perfect CSI for all the
channels of the MIMO two-way relay channel in the receive mode. This could be enabled through
a combination of channel reciprocity and feedback, however, we do
not explore the practicalities of this assumption in this work.
We consider two different assumptions about the availability of CSI at each
relay node. First we consider the case when each relay is assumed to have
perfect CSI for its own transmit and receive channel states,
which is denoted the {\it coherent} MIMO two-way relay channel.
Second we consider the case where the relays are assumed to have no CSI for
any of their channel states, which is denoted the {\it non-coherent} MIMO two-way relay channel.

Under similar assumptions, capacity scaling results have been found in \cite{gast} and \cite{bol} for the one-way relay channel (when $T_2$ has no data for $T_1$).
With a single antenna at both $T_1$, $T_2$ and each relay node, it is
 shown in \cite{gast} that the capacity of the one-way relay channel
scales logarithmically in the number of relay nodes,
as the number of relay nodes grow large.
The capacity scaling result of \cite{gast} was extended in \cite{bol} to the
case where the source and the destination are equipped with $M$ antennas and the all the relay nodes are equipped with $N$ antennas and it was shown that
there is a $M$ fold increase in the capacity compared to the
single antenna nodes \cite{gast}.

The main results in this paper are on the capacity scaling laws
for the MIMO two-way relay channel.
For the coherent MIMO two-way relay channel,
the capacity region is given by the convex hull of
\begin{eqnarray*}
R_{12} & \le & \frac{M}{2}\log{K} + {\cal O}(1) \\
R_{21} & \le & \frac{M}{2}\log{K} + {\cal O}(1)
\end{eqnarray*}
with probability $1$ as $K \rightarrow \infty$, where $R_{12}$ and $R_{21}$ is the rate of information transfer from
$T_1 \rightarrow T_2$ and $T_2 \rightarrow T_1$, and we use the notation $u(x) = {\cal O}(v(x))$  if
$|\frac{u(x)}{v(x)}|$ remains bounded, as $x \rightarrow \infty$.
For this result, the upper bound on the capacity region is obtained for
all $\alpha \in [0,1]$ and an achievable strategy with $\alpha = \frac{1}{2}$
is proposed to achieve the upper bound within a ${\cal O}(1)$ term.

For the non-coherent MIMO two-way relay channel, for a fixed $\alpha=\frac{1}{2}$, i.e. $T_1$ and $T_2$ transmit and receive for same amount of time, the capacity
region is given by the convex hull of
\begin{eqnarray*}
R_{12} & \le & \frac{M}{2}\log{P_R} + {\cal O}(1) \\
R_{21} & \le & \frac{M}{2}\log{P_R} + {\cal O}(1)
\end{eqnarray*}
with probability $1$ as $K \rightarrow \infty$, and $P_R$ is the sum of the
power available at each relay.

The strategy we use for deriving the capacity region of the MIMO
two-way relay channel as $K \rightarrow \infty$ is to obtain an upper bound using the cut-set bound \cite{cover}
and then derive an achievable rate region that approaches the upper bound.
For the coherent case, we propose the following achievable strategy.
Both $T_1$ and $T_2$ transmit $M$ independent data streams from their
$M$ antennas. Each relay node using its CSI, does match filtering for the
channels experienced
by the $M$ data streams from
$T_1\rightarrow T_2$ and $T_2\rightarrow T_1$, simultaneously,
and all the $M$ streams from $T_1$ are decoded jointly at $T_2$ and
vice versa.
We show that this strategy achieves the capacity region upper bound within a
${\cal O}(1)$ term without any cooperation between $T_1$ and $T_2$.
For the non-coherent case, we propose an achievable strategy where
both $T_1$ and $T_2$ transmit $M$ independent data streams from their
$M$ antennas. Since none of the relays have any CSI in this case,
we propose an AF achievable strategy where each relay transmits a
scaled version of received signal subject to its power constraint, similar to
\cite{bol}. With this strategy, as $K \rightarrow \infty$,
the channel between $T_1 \rightarrow T_2$
and $T_2 \rightarrow T_1$ converges to an $M \times M$ matrix with
independent and identically distributed entries that are Gaussian distributed
 and we show that the achievable rate region
provided by this AF strategy is within a ${\cal O}(1)$ term of the
upper bound in the high signal-to-noise
ratio (SNR) regime.

From an analytical perspective our work is closely related to \cite{bol},
which only deals with MIMO one-way relay channel. We summarize
the key differences and improvements of the proposed work compared to the
MIMO one-way relay channel capacity scaling result of \cite{bol} as follows.
\begin{itemize}
\item We assume a sum power constraint across
all the relays, which is a generalization of the individual power
constraint considered in \cite{bol}. An individual power constraint
might seem more reasonable from a practical point of view. We show
that even with a sum power constraint, however, the upper bound on
the capacity region of both the coherent and non-coherent MIMO
two-way relay channel can be achieved by allocating equal power to
all relay nodes. Thus, with a sum power constraint, as $K
\rightarrow \infty$, the total power transmitted by all relay nodes
remains bounded as opposed to \cite{bol}, where it is unbounded. The
optimal power allocation is similar to \cite{bol} from a practical
perspective, since each relay node is required to transmit the same
amount of power.

\item We upper bound the capacity of the coherent MIMO two-way
relay channel over all possible two-phase protocols, i.e. over
arbitrary $\alpha$, while in \cite{bol} an upper bound is derived only for
 $\alpha=\frac{1}{2}$.

\item Our achievable AF strategy for the coherent MIMO two-way relay channel
allows all the relays to help all the data streams going from $T_1$
and $T_2$ and $T_2$ to $T_1$ as opposed to \cite{bol} where only
$K/M$ relays are allowed to help each data stream. Moreover, in our
AF strategy joint decoding is performed at both the receivers in
contrast to \cite{bol}, where each data stream is decoded by a
single receive antenna treating all other streams as interference.
Due to both these advantages, our AF strategy provides with better
achievable rate regions compared to \cite{bol} for any finite $K$
and a better ${\cal O}(1)$ term as $K \rightarrow \infty$.

\item For the non-coherent MIMO two-way relay channel, we derive
an upper and lower bound for $\alpha = \frac{1}{2}$,
which differs by only a constant term at high SNR,
while in \cite{bol} only an achievable AF strategy is provided
without any upper bound for $\alpha = \frac{1}{2}$.
\end{itemize}

Our results show that with the MIMO two-way relay channel
there is a improvement in the capacity scaling by a factor of $2$, compared to
MIMO one-way relay channel \cite{bol},
for both the coherent and the non-coherent case.
We show that with the MIMO two-way relay channel, both $T_1$ and $T_2$
can simultaneously communicate with each other at a
rate which is equal to the maximum rate at which $T_1$ can
communicate to $T_2$ if $T_2$ was silent.
Therefore as $K \rightarrow \infty$,
the MIMO two-way relay channel is shown to create two interference free
parallel channels, one for $T_1 \rightarrow T_2$
and another for $T_2 \rightarrow T_1$, where on each channel a
rate given by the maximum possible
rate at which $T_1$ can communicate to $T_2$ link if $T_2$ was silent (one-way
communication \cite{bol})
is achievable.

{\it Organization:}
The rest of the paper is organized as follows. In Section \ref{sys},
we describe the MIMO two-way relay channel system model, the protocol under consideration and the key assumptions. In Section \ref{upbound}, we derive an upper bound on the capacity of the coherent MIMO two-way relay channel.
In Section \ref{ach}, by using a simple combining operation at the relays, we derive the asymptotic achievable  rate region for the coherent MIMO two-way
relay channel and show that it is possible to achieve the upper bound on
the capacity region of the coherent MIMO two-way relay channel within a
${\cal O}(1)$ term.
Section \ref{disc} summarizes and discusses the implication of the coherent
MIMO two-way relay channel capacity region.
For the non-coherent MIMO two-way relay channel, in Section
\ref{noncohup} we derive an upper bound on the achievable rate region.
Section \ref{noncohaf} gives a result on
asymptotic achievable rate region for the non-coherent
MIMO two-way relay channel using AF strategy at relays.
We draw some final conclusions in Section \ref{conc}.

{\it Notation:}
The following notation is used in this paper.
The superscripts $^T, ^*$ represent the transpose and transpose conjugate.
${\bf M}$ denotes a matrix, ${\bf m}$ a vector and $m_i$ the $i^{th}$ element
of ${\bf m}$. For a matrix ${\bf M} = [{\bf m}_1 \ {\bf m}_2 \ \ldots \ {\bf m}_n]$ by
${\text vec}({\bf M})$ we mean $[{\bf m}^T_1 \ {\bf m}^T_2 \  \ldots \ {\bf m}^T_n]^T$.
$det({\bf M})$ and $tr({\bf M})$ denotes the determinant and
trace of matrix ${\bf A}$, respectively.
${\bbE}_{x}(f(x))$ denotes the expectation of function
$f$ with respect to $x$.
$|| \cdot ||$ denotes the usual Euclidean norm of a vector. ${\bf I}_m$ is
a $m\times m$ identity matrix. $|{\cal X}|$ is the cardinality of set ${\cal X}$.
We use the usual notation for $u(x) = {\cal O}(v(x))$  if
$\left|\frac{u(x)}{v(x)}\right|$ remains bounded, as $x \rightarrow \infty$.
A circularly symmetric
complex Gaussian random variable with zero mean and variance $\sigma$
is denoted by $x \sim {\cal CN}(0,\sigma)$ and
$x|y \sim {\cal CN}(0,\sigma)$ denotes that given $y$,
$x$ is a circularly symmetric
complex Gaussian random variable with zero mean and variance $\sigma$.
The variance of a random variable $a$ is denoted by $\text{var}(a)$.
${\bbC}^{MN}$ denotes the set of $M\times N$ matrices with complex entries.
$x_n \xrightarrow{w.p. 1} y$ denotes that the sequence of random variables
$x_n$ converge to a random variable $y$ with probability $1$. We use $a \mapequal{w.p.1} b$ to denote equality with probability $1$ i.e. $Prob.(a=b) =1$  and
 $\mapright{w.p.1}$ is defined similarly.
$I(x;y)$ denotes the mutual information between $x$ and $y$ and $h(x)$ the differential entropy of $x$ \cite{cover}. To define a variable we use the symbol
$\bydef$.

\section{System and Channel Model}
\label{sys}
In this section we describe the MIMO two-way relay channel communication
protocol, followed by signal and channel models.
\begin{figure}
\centering
\includegraphics[height= 3in]{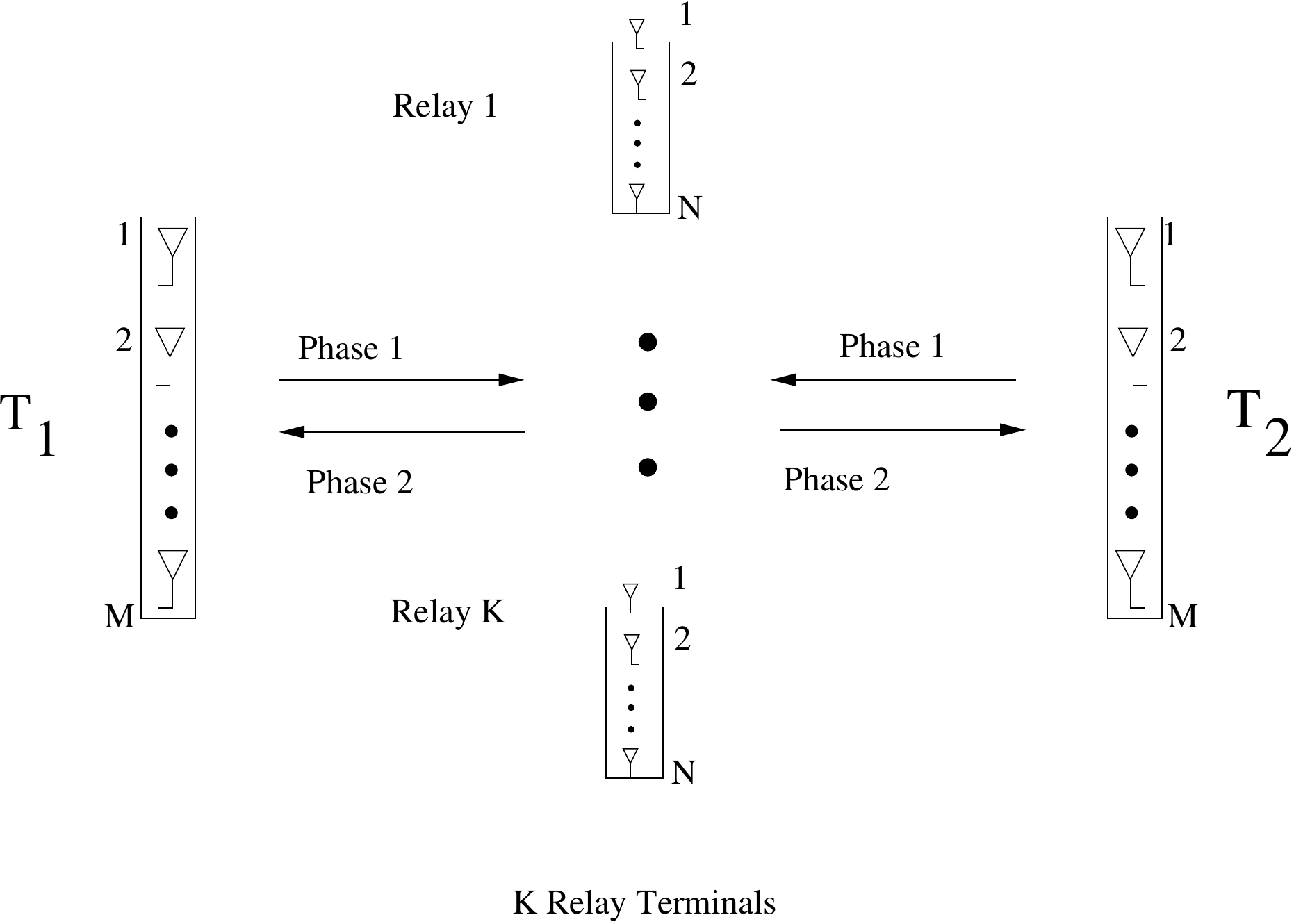}
\caption{Two way communication in two hops}
\label{twoway}
\end{figure}
Consider a wireless network where there are two terminals $T_1$ and
$T_2$ who want to exchange information via $K$ relays, as shown
in Fig. \ref{twoway}. The $K$ relays do not have any data of their
own and only help $T_1$ and $T_2$ communicate.
We assume that there is no direct path between $T_1$ and $T_2$ and
that they can communicate only through the $K$ relays. This is a
realistic assumption when relaying is used for coverage improvement
in cellular systems, since at the cell edge the signal to noise ratio is extremely low for the direct path. In ad-hoc networks, this occurs when two terminals want to communicate, but are out of each other's
transmission range.

We assume that both the terminals $T_1$ and $T_2$ have $M$ antennas
while all the $K$ relays each have $N$ antennas.
The terminals $T_1,T_2$ and all the relays operate in half-duplex mode i.e. cannot transmit and receive at the same time.
The communication protocol is summarized as follows  \cite{wit}.
In any given time slot, for the first $\alpha$ fraction of time, called the
{\it transmit phase}, both $T_1$ and $T_2$ are scheduled to transmit and
all the relays receive a superposition of the signals transmitted from $T_1$
and $T_2$. In the rest $(1-\alpha)$ fraction of the time slot, called the {\it receive phase},
all the relays are scheduled to
transmit simultaneously and both the terminals receive.

\begin{figure}
\centering
\includegraphics[height= 1.5in]{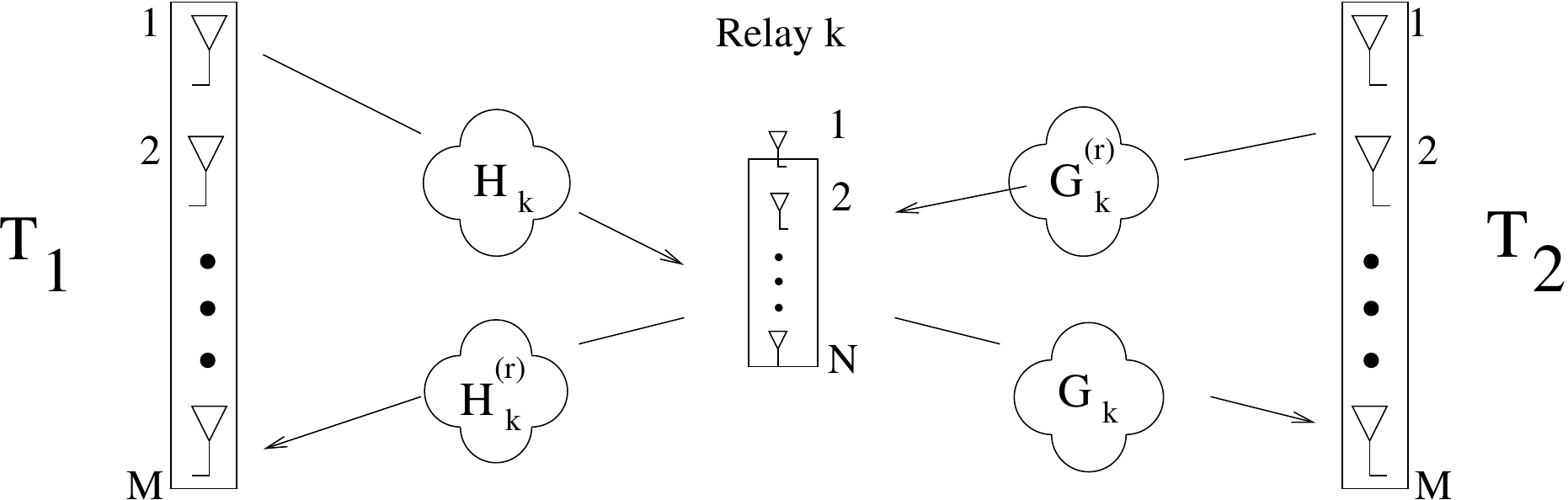}
\caption{Channel Model}
\label{channel}
\end{figure}

\subsection{Channel and Signal Model}
In this paper we assume that all the channels are
frequency flat slow fading block fading channels, where in a block of
time duration $T_c$ (called the coherence time), the channel coefficients
remain constant and change independently from block to block.
We assume that $T_c$ is more that the duration of the time slot used by $T_1$ and $T_2$ to communicate with each other as described before.
As shown in Fig. \ref{channel}, let
the forward channel between $T_1$ and the $k^{th}$
relay be ${\bf H}_k = [{\bf h}_{1k} \ {\bf h}_{2k} \ \ldots \ {\bf h}_{Mk}]$
and the backward channel between $k^{th}$ relay and $T_1$
be ${\bf H}_k^{(r)} = [{\bf h}^{(r)}_{k1} \ {\bf h}^{(r)}_{k2} \ \ldots \ {\bf h}^{(r)}_{kM}]$.
Similarly let the  forward channel between the $k^{th}$
relay and $T_2$ be ${\bf G}_k = [{\bf g}_{k1} \ {\bf g}_{k2} \ \ldots \ {\bf g}_{kM}]$
and the backward channel between $T_2$ and the $k^{th}$
relay be ${\bf G}_k^{(r)} =  [{\bf g}^{(r)}_{1k} \ {\bf g}^{(r)}_{2k}
 \ \ldots \ {\bf g}^{(r)}_{Mk}]$.
We assume that ${\bf H}_k, {\bf G}_k^{(r)} \in {\bbC}^{N\times
M}, {\bf H}_k^{(r)}, {\bf G}_k \in {\mathbb C}^{M\times N}$ with independent and
 identically
distributed (i.i.d.) ${\cal CN}(0,1)$ entries to keep the analysis simple and
tractable.
The ideas presented in this paper, however, apply to a broad class of channel distributions.

In the transmit phase, the $N\times 1$ received signal at the $k^{th}$ relay is given by
\begin{equation}
\label{relayrx}
{\bf r}_k =  \sqrt{\frac{PE_k}{M}} {\bf H}_{k}{\bf x} + \sqrt{\frac{PF_k}{M}}{\bf G}_{k}^{(r)}{\bf u} + {\bf n}_k
\end{equation}
where ${\bf
x}$ and ${\bf u}$ are the $M\times 1$ signals transmitted from $T_1$ and $T_2$  to
be decoded at $T_2$ and $T_1$ respectively, with ${\bbE }\{{\bf x}^{*}{\bf x}\} = {\bbE }\{{\bf u}^{*}{\bf u}\} = M$, $P$ is the power transmitted by $T_1$ and  $T_2$ and $E_k$
and $F_k$ are the path loss and shadowing parameters from $T_1$ and $T_2$ to the $k^{th}$ relay,
respectively. The noise $ {\bf n}_k$ is a
spatio-temporal white complex Gaussian random vector independent across
relays, with ${\bbE}({\bf n}_k{\bf n}_k^*) = \sigma^2 {\bf I}_N$.
Relay $k$ processes its incoming signal to transmit a $N\times 1$ signal
$\sqrt{\gamma_k}{\bf t}_k$ (with ${\bbE }\{{\bf t}_k^{*}{\bf t}_k\} = 1 $)
in the receive phase so that the transmitted power is $\gamma_k$. We assume a power
constraint of $P$ at both $T_1$ and $T_2$ and a sum power constraint
of $P_R $ across all the relays, i.e. ($\sum_{k=1}^K\gamma_k \le P_R$).
The $M \times 1$ received signal
${\bf v}$ and ${\bf y}$ at terminal $T_1$ and
$T_2$ respectively in the receive phase, are given by
\begin{equation}
\label{t1rx}
{\bf v} = \sum_{k=1}^{K}\sqrt{\gamma_kQ_k}{\bf H}_k^{(r)}{\bf t}_k + {\bf w}
\end{equation}
\begin{equation}
\label{t2rx}
{\bf y} = \sum_{k=1}^{K}\sqrt{\gamma_kP_k}{\bf G}_k{\bf t}_k + {\bf z}
\end{equation}
where $\gamma_k$ is the power transmitted by the $k^{th}$ relay, $Q_k, P_k$ are the path loss and shadowing parameters from the $k^{th}$ relay to $T_1$ and $T_2$, respectively, while ${\bf w}$ and ${\bf z}$
are $M\times 1$ spatio-temporal white complex Gaussian noise vectors with
${\bbE}({\bf w}{\bf w}^*) = {\bbE}({\bf z}{\bf z}^*) = \sigma^2 {\bf I}_M$.

The path loss and shadowing effect parameters $E_k, P_k, F_k$ and $Q_k$ $\forall \ k$
for the link between $T_1 \rightarrow T_2$ and $T_2 \rightarrow T_1$, are assumed to be independent
and identically distributed (i.i.d.)
random variables, strictly positive, bounded and remain constant over
the entire time period of interest.

Throughout this paper we assume that both $T_1$ and $T_2$ perfectly know
$\{{\bf H}_k, {\bf H}_k^{(r)}, {\bf G}_k, {\bf G}_k^{(r)}\} \ \forall  \ k,$ $ k=1,2,,\ldots K$
in the receive mode. To be precise, in the receive phase (i.e. when $T_1$ and $T_2$ receive
signal from all the relays),
$T_1$ and $T_2$ both know $\{{\bf H}_k, {\bf G}_k\}$
and $\{{\bf H}_k^{(r)}, {\bf G}_k^{(r)}\}$  $ \forall  \ k, \ k=1,2,,\ldots K$.
We also assume that no transmit CSI is available at $T_1$ and $T_2$,
i.e. in the transmit phase $T_1$ and $T_2$ have
no information about what the realization of ${\bf H}_k$ and $ {\bf G}_k$ is going to be when it transmits its signal to all the relays in the transmit phase, respectively.

In this paper we consider
two different assumptions about the CSI
at the relays. The first case we consider is the {\it coherent} MIMO two-way
relay channel, where
all the relays have CSI in the transmit as
well as the receive phases. For the coherent MIMO two-way relay channel,
in the transmit phase the $k^{th}$ relay knows
the realization of ${\bf H}_k, {\bf G}_k^{(r)}$ and in the receive phase
it knows the realization of ${\bf G}_k, {\bf H}_k^{(r)}$, which could be achieved through channel reciprocity or feedback.
We also consider the {\it non-coherent} MIMO two-way
relay channel where we assume that no CSI is available at
any relay.

\section{Upper Bound on The Capacity Region of The Coherent MIMO Two-Way Relay
Channel
 }
\label{upbound}
The main result in this section is an upper bound on the rate $R_{12}$ and
$R_{21}$ of reliable transmission from $T_1$ to $T_2$ and from $T_2$ to $T_1$,
given by the next Theorem.
\begin{thm}
\label{upperbound}
The capacity region of the
coherent MIMO two-way relay channel is upper bounded by
\[\lim_{K \rightarrow \infty} R_{12} \mapright{w.p. 1} \frac{M}{2}\log{K}+ {\cal O}(1)\]
\[\lim_{K \rightarrow \infty}R_{21} \mapright{w.p. 1} \frac{M}{2}\log{K}+ {\cal O}(1).\]
\end{thm}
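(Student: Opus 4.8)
The plan is to apply the cut-set bound \cite{cover} to the (deterministically) time-varying half-duplex network induced by the two-phase protocol, using only two cuts that separate $T_1$ from $T_2$. The cut $S=\{T_1\}$ is active only during the transmit phase, and the cut $S=\{T_1\}\cup\{\text{all relays}\}$ is active only during the receive phase, so the cut-set bound reads
\[
R_{12}\ \le\ \min\big\{\,\alpha\, I({\bf x};{\bf r}_1,\dots,{\bf r}_K\mid{\bf u}),\ \ (1-\alpha)\, I({\bf t}_1,\dots,{\bf t}_K;{\bf y})\,\big\},
\]
where the mutual informations are maximized over admissible input distributions (and, for the second term, over power allocations with $\sum_k\gamma_k\le P_R$). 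Since $\min\{\alpha c,(1-\alpha)c\}\le c/2$ for every $\alpha\in[0,1]$, it suffices to show that \emph{each} of the two mutual informations is $M\log K+{\cal O}(1)$ with probability $1$; the bound for $R_{21}$ then follows by relabelling $T_1\leftrightarrow T_2$ (and ${\bf H}_k\leftrightarrow{\bf G}_k^{(r)}$, etc.).

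For the transmit-phase cut, conditioning on ${\bf u}$ makes the term $\sqrt{PF_k/M}\,{\bf G}_k^{(r)}{\bf u}$ in \eqref{relayrx} known, leaving the $M\times NK$ channel ${\bf r}_k=\sqrt{PE_k/M}\,{\bf H}_k{\bf x}+{\bf n}_k$. A Gaussian input is optimal; symmetrizing the resulting log-determinant and using the input constraint $\mathrm{tr}({\bf Q}_x)\le M$ together with Jensen's inequality on the eigenvalues gives a bound of $M\log\!\big(1+\tfrac{P}{M\sigma^2}\sum_{k=1}^K E_k\|{\bf H}_k\|^2_F\big)$. Since the $E_k$ are i.i.d., bounded and independent of the fading, and the $\|{\bf H}_k\|^2_F$ are i.i.d.\ with finite mean, the strong law of large numbers yields $\sum_{k=1}^K E_k\|{\bf H}_k\|^2_F=\Theta(K)$ w.p.\ $1$, so this cut is $M\log K+{\cal O}(1)$ w.p.\ $1$. (In fact $\tfrac1K\sum_k E_k{\bf H}_k^*{\bf H}_k\xrightarrow{w.p. 1} N\,{\bbE}(E_1)\,{\bf I}_M$, which pins down the constant, but the crude trace bound is all that is needed.)

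For the receive-phase cut we must bound $I({\bf t}_1,\dots,{\bf t}_K;{\bf y})$ for ${\bf y}=\sum_k\sqrt{\gamma_kP_k}\,{\bf G}_k{\bf t}_k+{\bf z}$ with ${\bbE}\{{\bf t}_k^*{\bf t}_k\}=1$ and $\sum_k\gamma_k\le P_R$. Writing ${\bf R}_s$ for the covariance of $\sum_k\sqrt{\gamma_kP_k}\,{\bf G}_k{\bf t}_k$, the mutual information is at most $\log\det({\bf I}_M+\sigma^{-2}{\bf R}_s)\le M\log\!\big(1+\tfrac{1}{M\sigma^2}\mathrm{tr}\,{\bf R}_s\big)$, and by the triangle and Cauchy--Schwarz inequalities
\[
\mathrm{tr}\,{\bf R}_s\ \le\ {\bbE}\Big\|\sum_{k=1}^K\sqrt{\gamma_kP_k}\,{\bf G}_k{\bf t}_k\Big\|^2\ \le\ \Big(\sum_{k=1}^K\gamma_k\Big)\sum_{k=1}^K P_k\,\|{\bf G}_k\|_{\mathrm{op}}^2\,{\bbE}\|{\bf t}_k\|^2\ \le\ P_R\sum_{k=1}^K P_k\,\|{\bf G}_k\|^2_F ,
\]
which is again $\Theta(K)$ w.p.\ $1$ by the strong law (the $P_k$ are bounded i.i.d.\ and independent of the fading). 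Hence this cut too is $M\log K+{\cal O}(1)$ w.p.\ $1$, uniformly over the feasible inputs and power allocations, and combining the two cuts gives $R_{12}\le\tfrac{M}{2}\log K+{\cal O}(1)$ and, by symmetry, $R_{21}\le\tfrac{M}{2}\log K+{\cal O}(1)$.

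The step I expect to be the real obstacle is this receive-phase bound: one must verify that correlation among the relay transmit signals ${\bf t}_k$ (which is unrestricted in the cut-set bound), together with the \emph{sum} power budget $P_R$, cannot push the signal power received at $T_2$ above $\Theta(K)$. The Cauchy--Schwarz estimate above does this --- the received amplitude is at most $\sqrt{P_R}\big(\sum_k\|{\bf G}_k\|_{\mathrm{op}}^2\big)^{1/2}={\cal O}(\sqrt K)$ --- but it must be made uniform over the feasible set so that the maximization inside the cut-set bound does not eat into the ${\cal O}(1)$ slack. A secondary nuisance is the bookkeeping of the time fractions $\alpha$ and $1-\alpha$ for the half-duplex, time-varying network, and checking that no other cut yields a smaller value that would change the answer.
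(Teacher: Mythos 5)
Your proposal is correct and follows the paper's overall strategy: the same two cuts (broadcast and multiple access), the strong law of large numbers to show each cut carries $M\log K+{\cal O}(1)$, and $\min\{\alpha,1-\alpha\}\le\frac{1}{2}$ to finish. Two differences are worth noting. First, your multiple-access-cut argument is genuinely more elementary than the paper's: the paper invokes the transmit-CSI capacity formula of Telatar, computes the asymptotic eigenvalues of $\frac{1}{K}\sum_k P_k\bG_k\bG_k^*$ and the water level $\nu$ explicitly, whereas you bound $I(\bt_1,\dots,\bt_K;\by)\le\log\det(\bI_M+\sigma^{-2}\bR_s)\le M\log(1+\frac{1}{M\sigma^2}\mathrm{tr}\,\bR_s)$ and control $\mathrm{tr}\,\bR_s$ by Cauchy--Schwarz under the sum power constraint; this is cleaner, is manifestly uniform over correlated relay inputs, and still yields $M\log K+{\cal O}(1)$. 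Second, you have folded into your statement of the cut-set bound precisely the step the paper treats most carefully: the cut $S=\{T_1\}$ actually gives $R_{12}\le I(\bx;\br_1,\dots,\br_K,\by\,|\,\bt_1,\dots,\bt_K,\bu)$, and reducing this to $\alpha\,I(\bx;\br_1,\dots,\br_K\,|\,\bu)$ requires arguing that conditioning on the $\bt_k$ (which are functions of the $\br_k$) can be dropped and that $\by$ contributes nothing given the $\br_k$ and $\bt_k$; similarly the MAC cut gives $I(\bx,\bt_1,\dots,\bt_K;\by\,|\,\bu)$, which must be reduced to $I(\bt_1,\dots,\bt_K;\by)$. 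These reductions are true and routine (conditioning reduces entropy, plus the Markov structure of the two-phase protocol), but the paper's Remark~1 points out that exactly this step is where the corresponding proof in \cite{bol} has a gap, so asserting the reduced form as ``the cut-set bound'' without derivation is the one place your write-up should be expanded before it stands on its own.
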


\begin{figure}
\centering
\includegraphics[height= 2in]{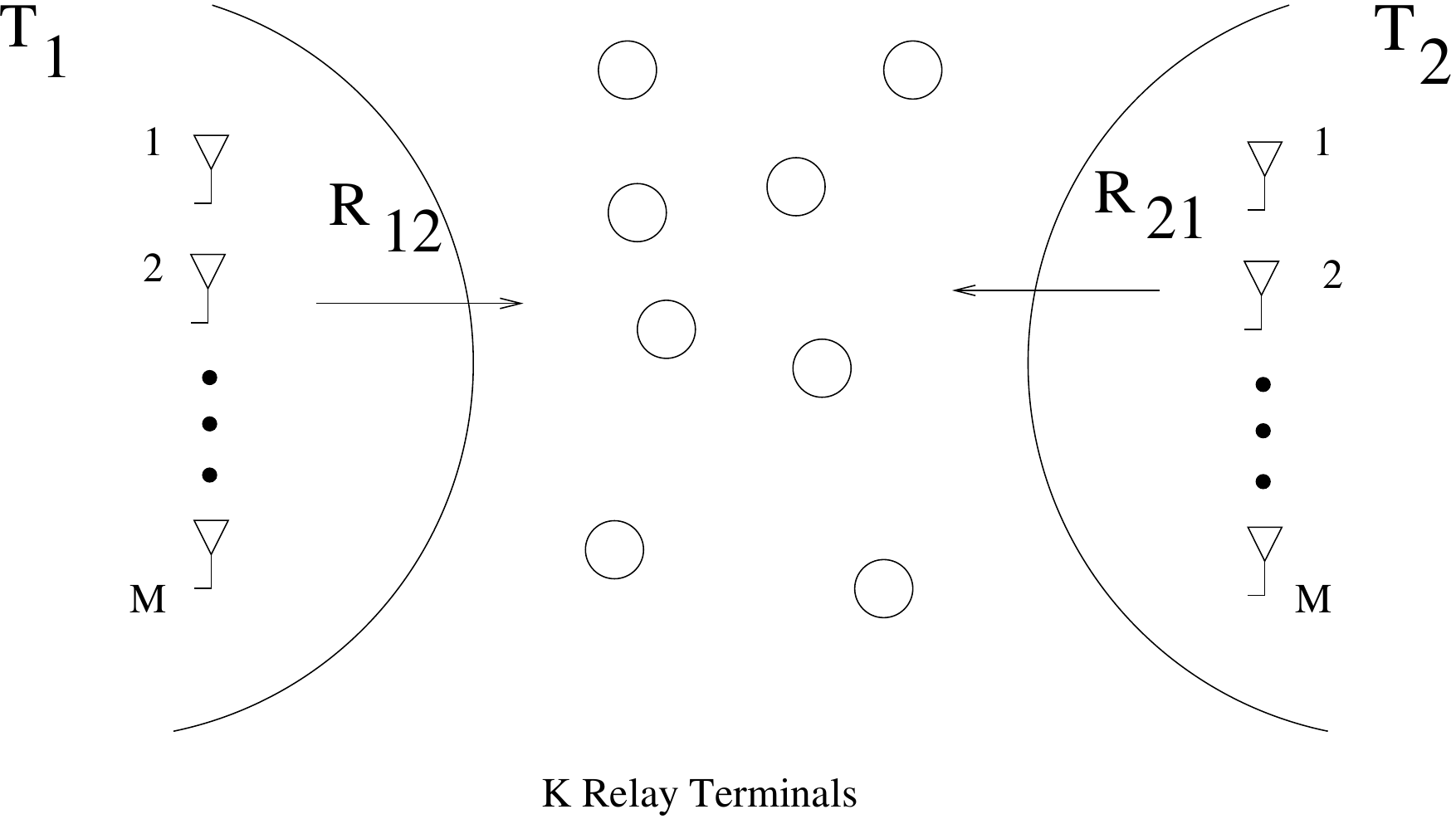}
\caption{Broadcast Cut}
\label{cutset}
\end{figure}

{\it Outline of the Proof:} We start by first separating $T_1$ and
then $T_2$ from the network and apply the cut-set bound \cite{cover}
to upper bound the rate of information transfer between $T_1
\rightarrow T_2$ and $T_2 \rightarrow T_1$, respectively. Using the
cut-set bound, we first show that the maximum rate of information
transfer  from $T_1 \rightarrow T_2$ ($T_2 \rightarrow T_1$) is
upper bounded by the maximum rate of information transfer between
$T_1$ ($T_2$) and relays $1$ to $K$ (the broadcast cut) and also
by the maximum rate of information transfer between relays $1$ to $K$ and $T_2$ ($T_1$) (the multiple access cut), Fig.
\ref{cutset} and Fig. \ref{cutsetmac}. Then we use the capacity result
from Section $4.1$ \cite{tel} to upper bound the maximum rate
through the broadcast cut for the case when CSI is only available at
the receiver (all relays) and all the relays collaborate to decode
the information. Similarly, for the multiple access cut as shown in
Fig. \ref{cutsetmac}, we upper bound the maximum rate at which all
relays can communicate to $T_2$ and $T_1$ by using
the capacity result from Section $3.1$ \cite{tel}, when CSI is known
both at all the relays and at $T_1$ and $T_2$ and all the relays collaborate
to transmit the information.

\begin{rem}
For the broadcast cut, the upper bound on the capacity of
the MIMO one-way relay channel
can be found in \cite{bol} which also trivially serves as an upper bound on
$R_{12}$ and $R_{21}$. It is easy to identify, however, that there is a gap in the proof of Theorem $1$ \cite{bol}. In Theorem $1$ \cite{bol},
it is argued that
\[I\left(\bs; \br_1, \br_2, \ldots, \br_K |
\bt_1, \bt_2, \ldots, \bt_K\right) = I\left(\bs; \br_1, \br_2, \ldots, \br_K\right)\] where $\bt_k$ is a function of $\br_k, \ k=1,2,\ldots, K$, to which
a counterexample can be easily found.
Thus, we do not use result of \cite{bol} directly and attempt a
different proof which is quite similar to the one given in \cite{bol}, but
closes the gap.
\end{rem}

The formal proof is as follows.

\begin{proof}
Throughout this proof we assume that both $T_1$ and $T_2$ perfectly
know ${\bf H}_k, {\bf H}_k^{(r)}, {\bf G}_k, {\bf G}_k^{(r)}$,
$E_k, F_k, Q_k,$ and $P_k$ for $ k=1,2,\ldots K$,
in the receive phase, and the $k^{th}$
relay knows ${\bf H}_k, {\bf
G}_k^{(r)}, E_k, F_k$ in the transmit phase and
${\bf H}_k^{(r)}, {\bf G}_k, Q_k, P_k$ in the receive phase.
For notational simplicity, we do not include ${\bf H}_k, {\bf
G}_k^{(r)}, E_k, F_k,  {\bf H}_k^{(r)}, {\bf G}_k, Q_k, P_k$
in the mutual information expressions. We clearly point out, though,
 whenever their knowledge is used to derive the upper bound.

{\bf Broadcast cut} - To prove the upper bound we make use of the cut-set bound (Section 14.10 \cite{cover}).
Separating the terminal $T_1$ from the rest of the network and applying
the cut-set bound on the broadcast cut as shown in Fig. \ref{cutset},
\begin{equation}
\label{uppbound1}
R_{12} \le I({\bf x};{\bf r}_1, {\bf r}_2,\ldots , {\bf r}_K, {\bf y} | {\bf t}_1, {\bf t}_2, \ldots {\bf t}_K, {\bf u}).
\end{equation}
Applying the cut-set bound while separating the terminal $T_2$,
\begin{equation}
\label{uppbound2}
R_{21} \le I({\bf u};{\bf r}_1,{\bf r}_2,\ldots ,{\bf r}_K, {\bf v} | {\bf t}_1, {\bf t}_2, \ldots {\bf t}_K, {\bf x})
\end{equation}
for some joint distribution $p({\bf x},{\bf t}_1,{\bf t}_2,\ldots,{\bf t}_K, {\bf u})$.
By definition of mutual information \cite{cover}
\begin{eqnarray*}
I({\bf x};{\bf r_1},{\bf r}_2,\ldots ,{\bf r}_K,{\bf y} | {\bf t}_1, {\bf t}_2, \ldots {\bf t}_K, {\bf u})  & = &
I({\bf x};{\bf r}_1,{\bf r}_2,\ldots ,{\bf r}_K | {\bf t}_1, {\bf t}_2, \ldots {\bf t}_K, {\bf u}) \\
& & + \  I({\bf x};{\bf y} |{\bf r}_1,{\bf r}_2,\ldots ,{\bf r}_K, \ {\bf t}_1, {\bf t}_2, \ldots {\bf t}_K, {\bf u}).
\end{eqnarray*}
Expanding the mutual information in terms of differential entropy,
\begin{eqnarray*}
I({\bf x};{\bf r}_1,{\bf r}_2,\ldots ,{\bf r}_K | {\bf t}_1, {\bf t}_2, \ldots {\bf t}_K, {\bf u}) & = & h({\bf x} | {\bf t}_1, {\bf t}_2, \ldots {\bf t}_K, {\bf u}) \\
& & - \ h({\bf x}| {\bf r}_1,{\bf r}_2,\ldots ,{\bf r}_K, {\bf t}_1, {\bf t}_2, \ldots {\bf t}_K, {\bf u} ).
\end{eqnarray*}
Since conditioning can only reduce entropy \cite{cover},
\begin{eqnarray*}
I({\bf x};{\bf r}_1,{\bf r}_2,\ldots ,{\bf r}_K | {\bf t}_1, {\bf t}_2, \ldots {\bf t}_K, {\bf u}) & \le & h({\bf x}|{\bf u}) \\
& & - \ h({\bf x}| {\bf r}_1,{\bf r}_2,\ldots ,{\bf r}_K, {\bf t}_1, {\bf t}_2, \ldots {\bf t}_K, {\bf u} ).
\end{eqnarray*}
Note that ${\bf t}_1,{\bf t}_2,\ldots ,{\bf t}_K$ is a function of
${\bf r}_1,{\bf r}_2,\ldots ,{\bf r}_K$, which implies
\begin{eqnarray*}
I({\bf x};{\bf r}_1,{\bf r}_2,\ldots ,{\bf r}_K | {\bf t}_1, {\bf t}_2, \ldots {\bf t}_K, {\bf u}) & \le & h({\bf x} |{\bf u}) \\
& & - \ h({\bf x}| {\bf r}_1,{\bf r}_2,\ldots ,{\bf r}_K, {\bf u} )
\end{eqnarray*}
and hence
\[I({\bf x};{\bf r}_1,{\bf r}_2,\ldots ,{\bf r}_K | {\bf t}_1, {\bf t}_2, \ldots {\bf t}_K, {\bf u}) \le I({\bf x};{\bf r}_1,{\bf r}_2,\ldots ,{\bf r}_K | {\bf u}).\]
From (\ref{t2rx}), with knowledge of $P_k$ and ${\bf G}_k,\ \forall \ k$ at terminal $T_2$,
\[I({\bf x};{\bf y} |{\bf r}_1,{\bf r}_2,\ldots ,{\bf r}_K, \  {\bf t}_1, {\bf t}_2, \ldots {\bf t}_K, {\bf u}) = I({\bf x},{\bf z})\]
where ${\bf z}$ is the AWGN noise. Since ${\bf x}$ and ${\bf z}$
are independent, $I({\bf x},{\bf z}) = 0$, and therefore
\[I({\bf x};{\bf r_1},{\bf r}_2,\ldots ,{\bf r}_K,{\bf y} | {\bf t}_1, {\bf t}_2, \ldots {\bf t}_K, {\bf u}) \le I({\bf x};{\bf r}_1,{\bf r}_2,\ldots ,{\bf r}_K | {\bf u}).\]
Note that
\begin{eqnarray}
\label{mieq} \nonumber
I({\bf x};{\bf r}_1,{\bf r}_2,\ldots ,{\bf r}_K,| {\bf u})  &=&
I\left({\bf x};\frac{{\bf r}_1}{\sqrt{K}},\frac{{\bf r}_2}{\sqrt{K}},\ldots ,\frac{{\bf r}_K}{\sqrt{K}}| {\bf u}\right) \\
 &=& h\left(\frac{{\bf r}_1}{\sqrt{K}},\frac{{\bf r}_2}{\sqrt{K}},\ldots ,\frac{{\bf r}_K}{\sqrt{K}}|{\bf u}\right) - h\left(\frac{{\bf r}_1}{\sqrt{K}},\frac{{\bf r}_2}{\sqrt{K}},\ldots ,\frac{{\bf r}_K}{\sqrt{K}}|{\bf x , u}\right).
\end{eqnarray}

Next we evaluate (\ref{mieq}) using (\ref{relayrx}). From (\ref{relayrx}),
\[{\bf r}_k = \sqrt{\frac{PE_k}{M}}{\bf H}_k{\bf x} +  \sqrt{\frac{PF_k}{M}}{\bf G}^{(r)}_k{\bf u} + {\bf n}_k.\]
Now with knowledge of $F_k$ and ${\bf G}_k^{(r)}$ at each relay
\[h\left(\frac{{\bf r}_1}{\sqrt{K}},\frac{{\bf r}_2}{\sqrt{K}},\ldots ,\frac{{\bf r}_K}{\sqrt{K}}|{\bf u}\right) =
h\left(\sqrt{\frac{PE_1}{KM}} {\bf H}_1{\bf x} + \frac{{\bf n}_1}{\sqrt{K}},
\sqrt{\frac{PE_2}{KM}} {\bf H}_2{\bf x} + \frac{{\bf n}_2}{\sqrt{K}},
\ldots,
\sqrt{\frac{PE_K}{KM}}{\bf H}_K{\bf x} + \frac{{\bf n}_K}{\sqrt{K}}|{\bf u}\right).\]
Since conditioning can only decrease entropy,
\[h\left(\frac{{\bf r}_1}{\sqrt{K}},\frac{{\bf r}_2}{\sqrt{K}},\ldots ,\frac{{\bf r}_K}{\sqrt{K}}|{\bf u}\right) \le h\left(\sqrt{\frac{PE_1}{KM}} {\bf H}_1{\bf x} + \frac{{\bf n}_1}{\sqrt{K}},
\sqrt{\frac{PE_2}{KM}} {\bf H}_2{\bf x} + \frac{{\bf n}_2}{\sqrt{K}},
\ldots,
\sqrt{\frac{PE_K}{KM}}{\bf H}_K{\bf x} + \frac{{\bf n}_K}{\sqrt{K}}\right).\]
With perfect knowledge of $E_k, F_k$ and ${\bf H}_k, {\bf G}_k^{(r)}$ at
each relay
\[h\left(\frac{{\bf r}_1}{\sqrt{K}},\frac{{\bf r}_2}{\sqrt{K}},\ldots ,\frac{{\bf r}_K}{\sqrt{K}}|{\bf x , u}\right)
= h\left(\frac{{\bf n}_1}{\sqrt{K}}, \frac{{\bf n}_2}{\sqrt{K}},\ldots , \frac{{\bf n}_K}{\sqrt{K}}\right)\]
and using (\ref{mieq}) it follows that
\begin{eqnarray*}
I({\bf x};{\bf r}_1,{\bf r}_2,\ldots ,{\bf r}_K|{\bf u}) & \le &
h\left(\sqrt{\frac{PE_1}{KM}}{\bf H}_1{\bf x} + \frac{{\bf n}_1}{\sqrt{K}},
\sqrt{\frac{PE_2}{KM}}{\bf H}_2{\bf x} + \frac{{\bf n}_2}{\sqrt{K}},
\ldots,
\sqrt{\frac{PE_K}{KM}}{\bf H}_K{\bf x} + \frac{{\bf n}_K}{\sqrt{K}}\right)\\
& & -  h\left(\frac{{\bf n}_1}{\sqrt{K}},\frac{{\bf n}_2}{\sqrt{K}},\ldots ,\frac{{\bf n}_K}{\sqrt{K}}\right).
\end{eqnarray*}
Using the capacity result from Section $4.1$ \cite{tel}
 when CSI is only known at the receiver, the R.H.S. can be upper bounded by
\[\log\det\left({\bf I}_M + \frac{1}{\frac{\sigma^2}{K}}\sum_{k=1}^{K}\frac{PE_k}{KM}{\bf H}_k^*{\bf H}_k\right)\]
and the maximum is achieved when ${\bf x}$ is circularly symmetric
complex Gaussian with ${\bbE}({\bf x}{\bf x}^*)= {\bf I}_M$. Thus, it follows that
\begin{equation}
\label{alphaupbound1}
I({\bf x};{\bf r}_1,{\bf r}_2,\ldots ,{\bf r}_K|{\bf u})\le \log\det\left({\bf I}_M + \frac{1}{\frac{\sigma^2}{K}}\sum_{k=1}^{K}\frac{PE_k}{KM}{\bf H}_k^*{\bf H}_k\right).
\end{equation}

Similarly, by interchanging the roles of ${\bf x}$ and ${\bf u}$ and replacing
$E_k$ with $F_k$ and ${\bf H}_k$ with ${\bf G}_k$,
\begin{equation}
\label{alphaupbound2}
I({\bf u};{\bf r}_1,{\bf r}_2,\ldots ,{\bf r}_K|{\bf x}) \le \log\det\left({\bf I}_M + \frac{1}{\frac{\sigma^2}{K}}\sum_{k=1}^{K}\frac{PF_k}{KM}{\bf G}_k^{(r)*}{\bf G}_k^{(r)}\right).
\end{equation}

Using the strong law of large numbers
\[\lim_{K \rightarrow \infty}\frac{1}{K}\sum_{k=1}^{K}\frac{PE_k}{M}{\bf H}_k^*{\bf H}_k \xrightarrow{w.p.1}\frac{P}{M}\bbE\left\{E_k{\bf H}_k^*{\bf H}_k\right\}\] and
\[\lim_{K \rightarrow \infty}\frac{1}{K}\sum_{k=1}^{K}\frac{PF_k}{M}{\bf G}_k^{(r)*}{\bf G}_k^{(r)}
 \xrightarrow{w.p.1}\frac{P}{M}\bbE\left\{F_k{\bf G}_k^{(r)*}{\bf G}_k^{(r)}\right\}.\]
Since
${\bbE}\{{\bf H}_k^*{\bf H}_k\} = {\bbE}\{{\bf G}_k^{(r)*}{\bf G}^{(r)}_k\} = N{\bf I}_M$ and let $\bbE\{E_k\} = \bbE\{F_k\} = \mu$, using
(\ref{uppbound1}), (\ref{uppbound2}), (\ref{alphaupbound1}), (\ref{alphaupbound2}) and the fact
that the sources $ T_1$ and $T_2$ transmit only for
$\alpha$ fraction of the time in each time slot, it follows that

\begin{equation}
\label{12upper}
\lim_{K \rightarrow \infty}R_{12} \mapright{w.p. 1} \alpha M\log\left(1+ \frac{KNP\mu}{M\sigma^2}\right)
\end{equation}
and
\begin{equation}
\label{21upper}
\lim_{K \rightarrow \infty}R_{21} \mapright{w.p. 1} \alpha M\log\left(1+ \frac{KNP\mu}{M\sigma^2}\right).
\end{equation}


Since $M, N, P, \mu$ and $\sigma^2$ are finite integers, as $K \rightarrow \infty$
\begin{equation}
\label{finalupboundbc1}
\lim_{K \rightarrow \infty}R_{12} \mapright{w.p. 1} \alpha M\log(K) + {\cal O}(1)
\end{equation}
and
\begin{equation}
\label{finalupboundbc2}
\lim_{K \rightarrow \infty}R_{21} \mapright{w.p. 1} \alpha M\log(K) + {\cal O}(1).
\end{equation}

{\bf Multiple access cut} - Again by using the cut-set bound, we bound the
maximum rate of information transfer $R_{12}$ ($R_{21}$) from $T_1
\rightarrow T_2$ ($T_2 \rightarrow T_1$) by the maximum rate of
information transfer across the multiple access cut as shown in Fig.
\ref{cutsetmac}. Using the cut-set bound, $R_{12}$ and $R_{21}$ are bounded by
\begin{figure}
\centering
\includegraphics[height= 2in]{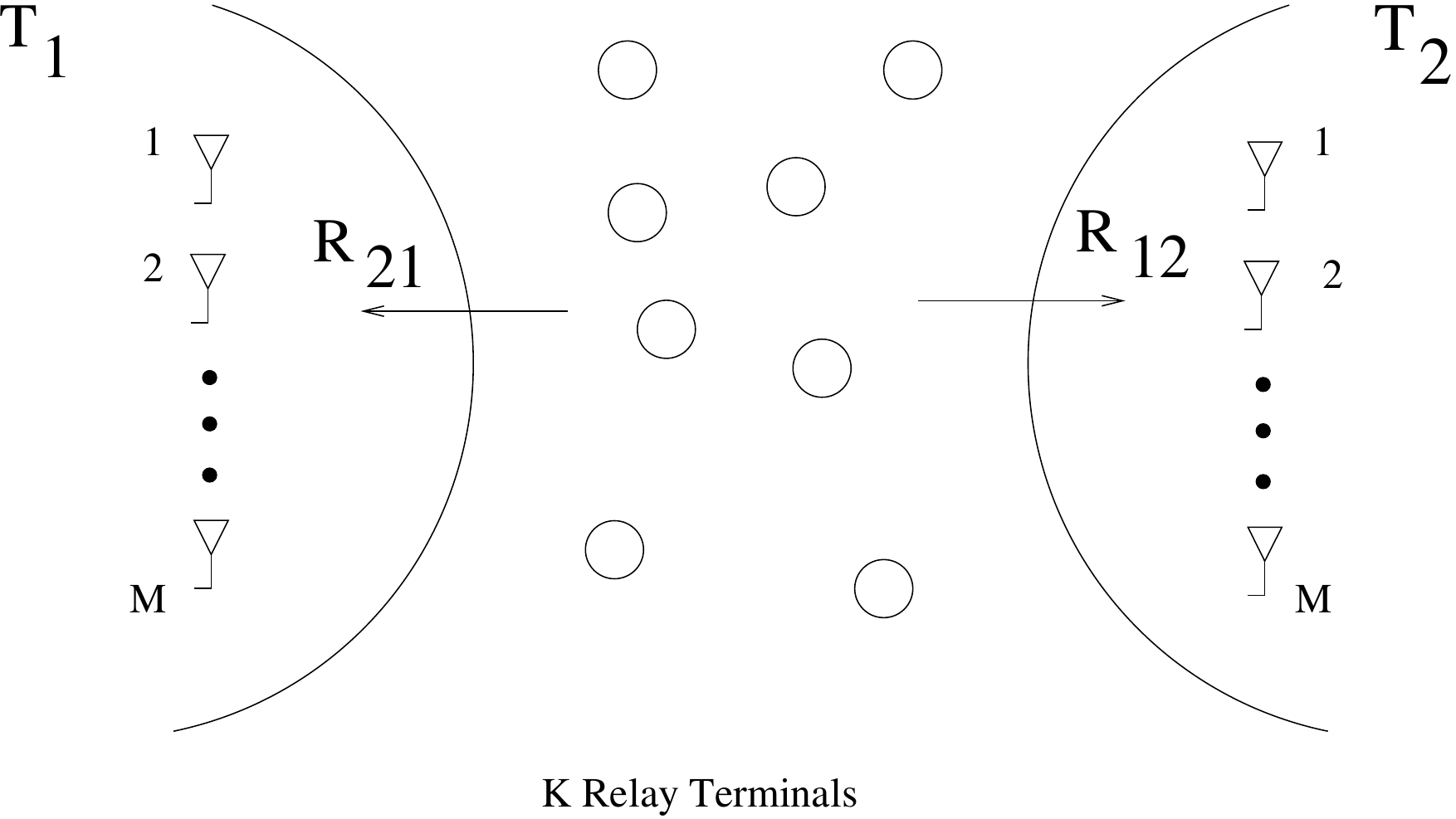}
\caption{Multiple Access Cut}
\label{cutsetmac}
\end{figure}
\begin{equation}
\label{upboundmac1}
R_{12} \le I({\bf x},  {\bf t}_1,  {\bf t}_2,  \ldots, {\bf t}_K ; {\bf y} | \bf {u})
\end{equation}
\begin{equation}
\label{upboundmac2}
R_{21} \le I({\bf u}, {\bf t}_1, {\bf t}_2, \ldots,  {\bf t}_K ; {\bf v} | \bf {x}).
\end{equation}
By definition of mutual information
\begin{eqnarray*}
I({\bf x}, {\bf t}_1, {\bf t}_2, \ldots,  {\bf t}_K ; {\bf y} | \bf {u})
&=&h({\bf y} | {\bf u}) - h({\bf y}| {\bf t}_1, {\bf t}_2, \ldots, {\bf t}_K, \bf {u}) \\
& & + \ h({\bf y}| {\bf t}_1, {\bf t}_2, \ldots, {\bf t}_K, {\bf u})
- h({\bf y}| {\bf t}_1, {\bf t}_2, \ldots, {\bf t}_K,  {\bf x}, \bf {u}).
\end{eqnarray*}
Note that given ${\bf t}_1, {\bf t}_2, \ldots, {\bf t}_K$, ${\bf y}$ is
independent of ${\bf x}$ and ${\bf u}$, thus
\[h({\bf y}| {\bf t}_1, {\bf t}_2, \ldots, {\bf t}_K,  {\bf x}, {\bf u}) =
 h({\bf y}| {\bf t}_1, {\bf t}_2, \ldots, {\bf t}_K, {\bf u}) = h({\bf y}| {\bf t}_1, {\bf t}_2, \ldots, {\bf t}_K).\]
Therefore
\[I({\bf x}, {\bf t}_1, {\bf t}_2, \ldots,  {\bf t}_K ; {\bf y} | {\bf u})
= h({\bf y} |{\bf u}) -  h({\bf y}| {\bf t}_1, {\bf t}_2, \ldots, {\bf t}_K). \]
Since conditioning can only reduce entropy,
\begin{eqnarray*}
I({\bf x}, {\bf t}_1, {\bf t}_2, \ldots,  {\bf t}_K ; {\bf y} | {\bf u})
&\le&  h({\bf y}) -  h({\bf y}| {\bf t}_1, {\bf t}_2, \ldots, {\bf t}_K) \\
&= & I({\bf t}_1, {\bf t}_2, \ldots, {\bf t}_K; {\bf y}).
\end{eqnarray*}
Hence from (\ref{upboundmac1}),
\begin{equation}
\label{t1t2mac}
R_{12} \le I({\bf t}_1, {\bf t}_2, \ldots, {\bf t}_K ; {\bf y}).
\end{equation}
Following similar steps
\begin{equation}
\label{t2t1mac}
R_{21} \le I({\bf t}_1, {\bf t}_2, \ldots, {\bf t}_K ; {\bf v}).
\end{equation}
Clearly $R_{12}, R_{21}$ are bounded by the maximum rate of
information across the multiple access cut Fig. \ref{cutsetmac}.

Next, we compute $I({\bf t}_1, {\bf t}_2, \ldots, {\bf t}_K ; {\bf y})$.
Recall From (\ref {t2rx}), that the received signal ${\bf y}$ at $T_2$ is
\[{\bf y} = \sum_{k=1}^{K}\sqrt{\gamma_kP_k}{\bf G}_k{\bf t}_k + {\bf z}.\]
Note that \[I({\bf t}_1, {\bf t}_2, \ldots, {\bf t}_K; {\bf y}) =
I\left({\bf t}_1, {\bf t}_2, \ldots, {\bf t}_K; \frac{{\bf y}}{\sqrt{K}}\right).\]
Dividing ${\bf y}$ by $\sqrt{K}$, the scaled signal is
\[\frac{{\bf y}}{\sqrt{K}} = \frac{1}{\sqrt{K}}\sum_{k=1}^{K}\sqrt{\gamma_kP_k}{\bf G}_k{\bf t}_k + \frac{{\bf z}}{\sqrt{K}}.\]
This can also be written as
\[\frac{{\bf y}}{\sqrt{K}} = \underbrace{\frac{1}{\sqrt{K}}
\left[\sqrt{P_1}{\bf G}_1 \ \sqrt{P_2}{\bf G}_2 \ \ldots \
\sqrt{P_K}{\bf G}_K\right]}_{\Phi}
\left[\sqrt{\gamma_1}{\bf t}_1 \sqrt{\gamma_2}{\bf t}_2 \ldots \sqrt{\gamma_K}{\bf t}_K\right]^T + \frac{{\bf z}}{\sqrt{K}}.\]
Note that $\Phi$ is a $M \times NK$ matrix.
Now assuming that all the relays know ${\bf G}_k \ \forall
 k$ (allowing cooperation among all relays), with total power available across all relays bounded by $P_R$, from Section $3.1$ \cite{tel},
\begin{equation}
\label{waterfill}
I\left({\bf t}_1, {\bf t}_2, \ldots, {\bf t}_K ;\frac{{\bf y}}{\sqrt{K}}\right) \le \sum_{l=1}^{\min{\{NK,M\}}}
\max{\left\{0,\log\left(\frac{K\lambda_l\nu}{\sigma^2}\right)\right\}}
\end{equation}
where $\lambda_l, l=1,2, \ldots, \min{\{NK,M\}}$ are the eigen values of $\Phi\Phi^*$ matrix  and $\nu$ is chosen such that
\[ \sum_{l=1}^{\min{\{NK,M\}}} \max\left\{0, \nu - \frac{1}{\lambda_l}\right\} = P_R. \]
By definition,
$\Phi\Phi^* = \frac{1}{K}\sum_{k=1}^K P_k\bG_k\bG_k^*$.
From the strong law of large numbers
\[\lim_{K \rightarrow \infty}\frac{1}{K}\sum_{k=1}^KP_k\bG_k\bG_k^* \xrightarrow{w.p. 1}
{\bbE}\left\{P_k\bG_k\bG_k^*\right\} =
{\bbE}\left\{P_k\right\}{\bbE}\left\{\bG_k\bG_k^*\right\} = \mu N{\bf I}_M \]
since ${\bbE}\left\{\bG_k\bG_k^*\right\} = N{\bf I}_M$ and
$\mu \bydef {\bbE}\left\{P_k\right\}$.
Therefore, it follows that
\[\lambda_i = N\mu  \ \ \forall \ i = 1,2, \ldots M.\]
which implies
\[\nu = \left(\frac{P_R}{M} + \frac{1}{N\mu}\right)\] and
\[I\left({\bf t}_1, {\bf t}_2, \ldots, {\bf t}_K; \frac{{\bf y}}{\sqrt{K}}\right) \mapright{w.p. 1} \sum_{l=1}^{M} \log\left(\frac{KN\rho}{\sigma^2}\left(\frac{P_R}{M} + \frac{1}{N\mu}\right)\right).\]
Since $M, N, P_R, \sigma^2$ and $\mu$ are all finite,
\[\lim_{K\rightarrow \infty}I\left({\bf t}_1, {\bf t}_2, \ldots, {\bf t}_K; \frac{{\bf y}}{\sqrt{K}}\right) \mapright{w.p. 1} M \log{K} + {\cal O}(1).\]
Moreover, since the relays transmit only for $(1-\alpha)$ fraction of
time in any given time slot,
\begin{equation}
\label{finalupboundmac1}
\lim_{K \rightarrow \infty} R_{12} \mapright{w.p. 1} (1-\alpha)
I\left({\bf t}_1, {\bf t}_2, \ldots, {\bf t}_K; \frac{{\bf y}}{\sqrt{K}}\right)
 \le (1-\alpha)M \log{K} + {\cal O}(1).
\end{equation}
Similarly, we can derive a bound for $R_{21}$ by using (\ref{t1rx}) and
(\ref{t2t1mac}),
\begin{equation}
\label{finalupboundmac2}
\lim_{K \rightarrow \infty} R_{21} \mapright{w.p. 1} (1-\alpha)
I\left({\bf t}_1, {\bf t}_2, \ldots, {\bf t}_K; \frac{{\bf v}}{\sqrt{K}}\right)
 \le (1-\alpha)M \log{K} + {\cal O}(1).
\end{equation}
Combining (\ref{finalupboundbc1}), (\ref{finalupboundbc2}),
(\ref{finalupboundmac1}) and (\ref{finalupboundmac2})
\[\lim_{K \rightarrow \infty}R_{12} \mapright{w.p. 1} \min{\{\alpha, 1-\alpha\}}M\log{K}+ {\cal O}(1)\]
\[\lim_{K \rightarrow \infty}R_{21} \mapright{w.p. 1} \min{\{\alpha, 1-\alpha\}}M\log{K}+ {\cal O}(1).\]
Since $\alpha \in [0,1]$, $\min{\{\alpha, 1-\alpha\}} \le \frac{1}{2}$, therefore
\[\lim_{K \rightarrow \infty}R_{12} \mapright{w.p. 1} \frac{M}{2}\log{K}+ {\cal O}(1)\]
\[\lim_{K \rightarrow \infty}R_{21} \mapright{w.p. 1} \frac{M}{2}\log{K}+ {\cal O}(1).\]

\end{proof}
{\it Discussion: }
In Theorem \ref{upperbound}, we obtained upper bounds on $R_{12}$ and
$R_{21}$ by using cut-set bound on the broadcast cut (Fig. \ref{cutset}) and the multiple access cut (Fig. \ref{cutsetmac}). For the broadcast cut, the upper bound corresponds to the case when the transmitter $T_1$ or $T_2$ has no CSI while all the relays collaborate to decode the message sent by $T_1$ or $T_2$ with perfect CSI, while the upper bound in the multiple access cut corresponds to the case when all the relays collaborate to transmit data to $T_1$ or $T_2$ using all their $NK$ antennas with transmit CSI available at all relays.
An important point to note is that the upper
bound obtained in Theorem \ref{upperbound} is
for any arbitrary $\alpha$, which implies that the upper bound
is valid for all two-phased MIMO two-way relay channel protocols and not for only
$\alpha = 1/2$ as is the case in \cite{bol}.


In the next section we illustrate a simple amplify and forward (AF) strategy
 whose achievable rate is a  constant term away from the upper bound.

\section{Lower Bound on The Capacity Region of The Coherent MIMO Two-Way Relay Channel}
\label{ach} In this section we propose an AF strategy to achieve the
upper bound obtained in Theorem \ref{upperbound} on the capacity
region of the MIMO two-way relay channel within a constant term.
The motivation to consider AF is because with DF, at
each relay, to decode $T_1$'s message $T_2$'s message
is treated as interference and vice-versa, which implies that
the achievable rate region with DF is same as that of the
achievable rate region
for the multiple access channel \cite{cover}.
Since the achievable rate region of the multiple access channel is
strictly less than the upper
bound derived in Theorem \ref{upperbound} one cannot hope
to achieve the upper bound given by Theorem \ref{upperbound}
using DF protocol. With AF, however, each relay processes the
received signal using its CSI and transmits it to $T_1$ and $T_2$ in
the receive phase without any decoding. Since both $T_1$ and $T_2$
know what they transmit, (i.e. $T_1$ knows ${\bf x}$ and $T_2$ knows
${\bf u}$) with perfect receive CSI, both $T_1$ and $T_2$ can cancel
the contribution of their own transmitted signal from the received
signal and decode other terminal's message without any self
interference.

Before discussing the MIMO two-way relay channel with multiple relays,
let us first consider the case of a MIMO one-way relay channel (i.e. $T_2$ has no
data for $T_1$) with only one relay.
For this case, the optimal AF strategy to maximize
mutual information at the destination is to multiply $\bV_2\bD\bU_1^{*}$ to the signal at the relay,
where the singular value decomposition of
${\bf H}_1$ is $ \bU_1\bD_1\bV_1^*$ and ${\bf G}_1$ is  $\bU_2\bD_2\bV_2^*$ and
$\bD$ is a diagonal matrix whose entries are chosen by waterfilling  \cite{medina}.
Finding the optimal AF strategy for the
multiple relay case is a non-trivial problem and has not been found
to the best of our knowledge. Moreover, the two-way nature of our problem
makes it even more difficult to find the optimal AF strategy.

To obtain a lower bound on the achievable rates for the MIMO two-way relay channel
we propose a {\it dual channel matching} AF strategy in which relay $k$ multiplies
$\frac{1}{\sqrt{\beta_k}} \left(\bG_k^*\bH_k^* + {\bf H}_k^{(r)*}{\bf G}^{(r)*}_{k}\right)$ to the received signal and
forwards it to $T_1$ and $T_2$,  where $\beta_k$ is the normalization constant
to satisfy the power constraint.
In this AF strategy, each relay tries to match both the channels
which the data streams from $T_1$ to $T_2$ and $T_2$ to $T_1$ experience.
In dual channel matching,
the complex conjugates
of the channels are used directly rather than the unitary matrices from the
SVD of the channels \cite{medina}. This modification makes it easier to
compute the
achievable rates for the MIMO two-way relay channel.

Together with dual channel matching we
restrict the signal transmitted from $T_1$ and $T_2$, $\bx$ and
$\bu$, respectively, to be circularly symmetric complex Gaussian with
covariance matrix ${\bbE}\{{\bf x}{\bf x}^*\} =
{\bbE}\{{\bf u}{\bf u}^*\} = \bQ$ with $\tr\left( \bQ\right) =M$
(to meet the power constraint) to obtain an
achievable rate region for the coherent MIMO two-way relay channel.
Moreover, we use $\alpha = \frac{1}{2}$ i.e. $T_1$ and $T_2$ transmit and
receive for same amount of time. The achievable rates $R_{12}$ and $R_{21}$ using the
above described AF strategy
are given by the following Theorem.

\begin{thm}
\label{lowerbound}
For the coherent MIMO two-way relay channel, the achievable rates are given by
\begin{eqnarray*}
\lim_{K \rightarrow \infty}R_{12} & \mapequal{w.p.1} & \frac{M}{2}\log{(K)} + {\cal O}(1) \\
\lim_{K \rightarrow \infty}R_{21} & \mapequal{w.p.1} & \frac{M}{2}\log{(K)} + {\cal O}(1)
\end{eqnarray*}
with no cooperation required between $T_1$ and $T_2$.
\end{thm}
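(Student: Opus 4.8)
The plan is to work out the end-to-end relation seen by $T_2$ after it cancels its own transmitted signal, normalise everything by $1/\sqrt{K}$ so that the strong law of large numbers applies relay-by-relay (exactly as it is used throughout the proof of Theorem~\ref{upperbound}), and then read off the achievable rate as the mutual information of a \emph{fixed} $M\times M$ Gaussian MIMO channel with receiver CSI. The bound on $R_{21}$ is completely symmetric, obtained by interchanging $T_1,T_2$ and using (\ref{t1rx}) in place of (\ref{t2rx}).

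First I would substitute into (\ref{t2rx}) the relay outputs $\sqrt{\gamma_k}\,\bt_k=\sqrt{\gamma_k/\beta_k}\,\bB_k\br_k$, with $\bB_k=\bG_k^{*}\bH_k^{*}+\bH_k^{(r)*}\bG_k^{(r)*}$ and $\br_k$ given by (\ref{relayrx}), and take the equal allocation $\gamma_k=P_R/K$ (enough to meet the sum constraint). Since $T_2$ knows $\bu$, all channel matrices and all path-loss coefficients, it subtracts the term carrying $\bu$ exactly and is left with
\[
\tilde\by \;=\; \sqrt{K}\,\bA_K\,\bx \;+\; \frac{1}{\sqrt{K}}\sum_{k=1}^{K}\sqrt{\tfrac{P_R P_k}{\beta_k}}\,\bG_k\bB_k\bn_k \;+\; \bz,
\qquad
\bA_K \;=\; \frac1K\sum_{k=1}^{K} a_k\,\bG_k\bB_k\bH_k ,
\]
with $a_k=\sqrt{P_R P_k P E_k/(M\beta_k)}$; conditioned on all channels the additive term is zero-mean complex Gaussian with covariance $\bS_K=\sigma^2\bI_M+\tfrac{\sigma^2}{K}\sum_k\tfrac{P_R P_k}{\beta_k}\,\bG_k\bB_k\bB_k^{*}\bG_k^{*}$. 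I would take the ``normalisation constant'' $\beta_k$ to be the \emph{average} relay output power $\bbE\{\|\bB_k\br_k\|^{2}\}$, a positive finite number (hence independent of $K$) for which $\bbE\{\bt_k^{*}\bt_k\}=1$ holds; crucially, $a_k$ and $P_R P_k/\beta_k$ then depend only on the path-loss variables $P_k,E_k$ and are independent of the four channel matrices.

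Second, the summands of $\bA_K$ (resp.\ of $\bS_K$) are i.i.d.\ in $k$ with finite mean — $\beta_k$ is bounded away from $0$ and $E_k,P_k$ are bounded, so $a_k\|\bG_k\bB_k\bH_k\|\le C\sqrt{E_k}\,\|\bG_k\|\,\|\bB_k\|\,\|\bH_k\|$, which has finite expectation — so the strong law gives $\bA_K\xrightarrow{w.p.1}\bbE\{a_k\bG_k\bB_k\bH_k\}$ and $\bS_K\xrightarrow{w.p.1}\sigma^2\bI_M+\sigma^{2}\bbE\{\tfrac{P_R P_k}{\beta_k}\bG_k\bB_k\bB_k^{*}\bG_k^{*}\}$. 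Writing $\bG_k\bB_k\bH_k=\bG_k\bG_k^{*}\bH_k^{*}\bH_k+\bG_k\bH_k^{(r)*}\bG_k^{(r)*}\bH_k$, the second (cross) term has mean zero because $\bG_k$ appears in it once, is zero-mean and independent of $\bH_k^{(r)},\bG_k^{(r)},\bH_k$, while $\bbE\{a_k\bG_k\bG_k^{*}\bH_k^{*}\bH_k\}=\bbE\{a_k\}\,\bbE\{\bG_k\bG_k^{*}\}\,\bbE\{\bH_k^{*}\bH_k\}=N^{2}\bbE\{a_k\}\,\bI_M$; hence $\bA_K\xrightarrow{w.p.1}a\,\bI_M$ with $a=N^{2}\bbE\{a_k\}>0$, so the limiting end-to-end channel is nonsingular. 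Expanding $\bG_k\bB_k\bB_k^{*}\bG_k^{*}$ likewise kills the two mixed terms (each contains exactly one unpaired zero-mean factor $\bH_k$), leaving two positive-definite contributions, so $\bS_K\xrightarrow{w.p.1}s\,\bI_M$ with $s>0$; thus $\bS_\infty\succ0$.

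Finally, with $\bQ=\bI_M$ (which meets $\tr(\bQ)=M$) and $\alpha=\tfrac12$ the achievable rate is
\[
R_{12}\;=\;\tfrac12\,I\!\left(\bx;\tilde\by\,\middle|\,\{\bH_k,\bG_k,\bH_k^{(r)},\bG_k^{(r)}\}_k\right)\;=\;\tfrac12\log\det\!\left(\bI_M+K\,\bS_K^{-1}\bA_K\bA_K^{*}\right),
\]
and since $\bS_K^{-1}\bA_K\bA_K^{*}\xrightarrow{w.p.1}(a^{2}/s)\,\bI_M\succ0$, the identity $\log\det(\bI_M+K\bM_K)=M\log K+\log\det\bM_K+\log\det(\bI_M+\tfrac1K\bM_K^{-1})$ gives $\lim_{K\to\infty}R_{12}\mapequal{w.p.1}\tfrac M2\log K+{\cal O}(1)$, and the same argument applied to $T_1$ gives the bound for $R_{21}$. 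No cooperation between $T_1$ and $T_2$ is used: each sends i.i.d.\ streams with no transmit CSI, cancels only its own known signal, and performs an ordinary joint MIMO decode. The delicate point, and the one I expect to be the real obstacle, is the third step — ruling out a rank drop of the limiting channel $\bA_\infty$: taking $\beta_k$ deterministic (averaged over the fading) is what decouples $a_k$ from the channels and makes the cross term produced by the two-way matching vanish in the limit, whereas with a per-realisation normalisation that term would instead have to be controlled, e.g.\ via $|\tr(\bH_k^{(r)*}\bG_k^{(r)*}\bH_k\bG_k)|\le\|\bG_k^{(r)}\bH_k^{(r)}\|_F\|\bH_k\bG_k\|_F$ together with the symmetry between the two relaying paths.
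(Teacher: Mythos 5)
Your proposal is correct and follows essentially the same route as the paper's proof: the dual channel matching amplification $\bG_k^{*}\bH_k^{*}+\bH_k^{(r)*}\bG_k^{(r)*}$ with equal power $\gamma_k=P_R/K$, self-interference cancellation at $T_2$ using its knowledge of $\bu$ and the channels, the strong law of large numbers to show the effective channel converges to a nonzero multiple of $\bI_M$ (with the cross terms vanishing in expectation) and the forwarded-noise covariance to a positive multiple of $\bI_M$, and finally $\log\det(\bI_M+K\bM_K)=M\log K+{\cal O}(1)$. Your explicit remark that $\beta_k$ should be the fading-averaged normalization, so that it decouples from the channel matrices in the SLLN step, makes precise a point the paper leaves implicit.
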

\begin{proof}
From (\ref{relayrx}), the received signal at the $k^{th}$ relay is given by
\begin{equation}
\label{rxdcm}
{\bf r}_k =  \sqrt{\frac{PE_k}{M}} {\bf H}_{k}{\bf x} +
\sqrt{\frac{PF_k}{M}}{\bf G}^{(r)}_{k}{\bf u} + {\bf n}_k.
\end{equation}
Using dual channel matching, at relay $k$
the transmitted signal ${\bf t}_k$ is
given by
\begin{equation}
\label{txdcm}
{\bf t}_k =
\frac{\left(\bG_k^*\bH_k^* + {\bf H}_k^{(r)*}{\bf G}^{(r)*}_{k}\right)}{\sqrt{\beta_k}}\br_k
\end{equation}
where $\beta_k$ is to ensure that $\bbE\left\{{\bf t}_k^*{\bf t}_k\right\} =1$.
The received signal at $T_2$ is given by
\begin{equation}
\label{destrecsig}
\by = \sum_{k=1}^K\sqrt{\gamma_kP_k}\bG_k\bt_k + \bz.
\end{equation}
Expanding (\ref{destrecsig}) using (\ref{rxdcm}) and (\ref{txdcm})
\begin{eqnarray*}
\by &=& \underbrace{\sum_{k=1}^K\sqrt{\frac{\gamma_kP_kPE_k}{M\beta_k}}\bG_k
\left(\bG_k^*\bH_k^* + {\bf H}_k^{(r)*}{\bf G}^{(r)*}_{k}\right)\bH_k}_{\bA}\bx +
\sum_{k=1}^K\sqrt{\frac{\gamma_kP_kPF_k}{M\beta_k}}\bG_k
\left(\bG_k^*\bH_k^* + {\bf H}_k^{(r)*}{\bf G}^{(r)*}_{k}\right){\bf G}^{(r)}_{k}\bu\\
&&+
\sum_{k=1}^K\underbrace{\sqrt{\frac{\gamma_kP_k}{\beta_k}}\bG_k
\left(\bG_k^*\bH_k^* + {\bf H}_k^{(r)*}{\bf G}^{(r)*}_{k}\right)}_{\bB_k}{\bf n}_k
+ \bz.
\end{eqnarray*}
Since $\bu$ and the channel coefficients $\bH_k, \bG_k, {\bf H}_k^{(r)}, {\bf G}^{(r)}_{k}$ are known at $T_2,\  \forall \ k$, the second term can be removed from the received signal at $T_2$.
Moreover, as described before $\bx$ is circularly symmetric
complex Gaussian vector
with covariance matrix $\bQ$, thus the achievable rate for
$T_1$ to $T_2$ link is
\cite{tel}
\[R_{12} = \frac{1}{2}\log\det\left(\bI_M +  \frac{\bA\bQ\bA^*}
{\sum_{k=1}^K\bB_k\bB_k^* + \bI_M}\right),\]
since $\bbE\left\{\bn_k\bn_k^*\right\} = \bbE\left\{\bz\bz^*\right\}= \bI_M, \ \forall \ k.$
Using $\bQ = \frac{1}{M}\bI$ and dividing the numerator and denominator
by $K^2$,
\[R_{12} = \frac{1}{2}\log\det\left(\bI_M +
\frac{K}{M}\frac{\frac{\bA}{K}\frac{\bA^*}{K}}
{\frac{1}{K}\sum_{k=1}^K\bB_k\bB_k^* + \frac{1}{K}\bI_M}\right).\]
Note that as $K \rightarrow \infty$, the contribution from $\frac{1}{K}\bI_M$
can be neglected and it follows that
\[R_{12} = \frac{1}{2}\log\det\left(\bI_M +
\frac{K}{M}\frac{\frac{\bA}{K}\frac{\bA^*}{K}}
{\frac{1}{K}\sum_{k=1}^K\bB_k\bB_k^* }\right).\]
Using equal power allocation $\gamma_k = \frac{P_R}{K}$
to satisfy the total
power constraint of $P_R$ across all relays
\[I(\bx;\by) = \log\det\left(\bI_M +
\frac{K}{M}\frac{\frac{\hat{\bA}}{K}\frac{\hat{\bA}^*}{K}}
{\frac{1}{K}\sum_{k=1}^K\hat{\bB}_k\hat{\bB}_k^* }\right)\]
where $\hat{\bA} = \sum_{k=1}^K\sqrt{\frac{P_kPE_k}{M\beta_k}}\bG_k
\left(\bG_k^*\bH_k^* + {\bf H}_k^{(r)*}{\bf G}^{(r)*}_{k}\right)\bH_k$
and $\hat{\bB}_k = \sqrt{\frac{P_k}{\beta_k}}\bG_k
\left(\bG_k^*\bH_k^* + {\bf H}_k^{(r)*}{\bf G}^{(r)*}_{k}\right)$.
Using the strong law of large numbers,
\[\lim_{K\rightarrow\infty}\frac{\hat{\bA}}{K} \xrightarrow{w.p.1} \sqrt{\frac{P}{M}}\kappa M^2\bI_M\] and
 \[\lim_{K\rightarrow\infty}\frac{\hat{\bA^*}}{K} \xrightarrow{w.p.1} \sqrt{\frac{P}{M}}\kappa M^2\bI_M\]
since $\bbE\left\{\bG_k\bG_k^*\right\} = \bbE\left\{\bH_k^*\bH_k\right\} = M\bI_M \ \forall \ k$,
$\bbE\left\{\bG_k\bG_k^{(r)*}\right\}  = \bbE\left\{\bH_k^{(r)*}\bH_k\right\}  = 0\bI_M\ \forall \ k$
and $P_k, E_k, \beta_k$ are i.i.d. with
$\kappa = \bbE\left\{\sqrt{\frac{P_kE_k}{\beta_k}}\right\}.$
Moreover, from the strong law of large numbers
\[\lim_{K\rightarrow\infty}\frac{\sum_{k=1}^K\hat{\bB}_k\hat{\bB}_k^*}{K} \xrightarrow{w.p.1}
\bbE\left\{\hat{\bB}_k\hat{\bB}_k^*\right\} = \theta\bI_M\]
for some finite $\theta$,
since $\hat{\bB}_k\hat{\bB}_k^*$ are i.i.d. for each $k$ and each entry of
$\hat{\bB}_k$ has finite variance.
Thus, using these approximations,
\[\lim_{K \rightarrow \infty}R_{12} \mapequal{w.p.1} \ \frac{1}{2}\log\det\left(\bI_M + \frac{KP\kappa^2 M^2}{\theta}\bI_M\right).\]
\[\lim_{K \rightarrow \infty}R_{21} \mapequal{w.p.1} \frac{M}{2}\log\left(1 + \frac{KP\kappa^2 M^2}{\theta}\right).\]
Since $M,P, \kappa$ and $\theta$ are finite, as $K \rightarrow \infty$,
\[\lim_{K \rightarrow \infty}R_{21} \mapequal{w.p.1} \frac{M}{2}\log K + {\cal O}(1), \]
and similarly
\[\lim_{K \rightarrow \infty}R_{21} \mapequal{w.p.1} \frac{M}{2}\log K + {\cal O}(1).\]
\end{proof}
\vspace{0.25in} {\it Discussion:}
Theorem \ref{lowerbound} shows
that the achievable rate in each direction $T_1 \rightarrow T_2 $ or
$T_2 \rightarrow T_1$ with the coherent MIMO two-way relay channel using
dual channel matching at each relay is given by
$\frac{M}{2}\log{(K)} + {\cal O}(1)$ as
$K \rightarrow \infty$.
More importantly, Theorem \ref{lowerbound} also shows that
both $T_1$ and $T_2$ can simultaneously transmit at rate $\frac{M}{2}\log{(K)} + {\cal O}(1)$, without
affecting each other's data rate and without requiring any
cooperation between themselves.
The result can be interpreted as follows.
With dual channel matching, the transmitted signals
from $T_1$ and $T_2$ are coherently added by all relays and the
equivalent array gain of $\log K$ is obtained at both the
receivers $T_1$ and $T_2$ for all the $M$ data streams transmitted from
$T_2$ and $T_1$. Moreover, with perfect channel knowledge, $T_1$ and $T_2$
can cancel the self interference their own transmitted signals create,
which enable $T_1$ and $T_2$ to simultaneously achieve the rate of
$\frac{M}{2}\log{(K)} + {\cal O}(1)$,
without requiring any cooperation.

Recall that for the MIMO one-way relay channel an AF strategy was proposed in
\cite{bol} to achieve the upper bound within a constant term. With
the AF strategy of \cite{bol}, $M$ independent data streams are
transmitted from $T_1$ and all the relays are divided into $M$ sets
with each set helping a particular data stream and independent
decoding of data streams is employed at the receiver. Compared to
the AF strategy of \cite{bol}, with dual channel matching all relays
participate in transmission of all data streams from $T_1$ to $T_2$
and $T_2$ to $T_1$ and thus provides a better achievable rate
region. Moreover, joint decoding of data streams at respective
receivers with dual channel matching removes the adverse effect of
inter-stream interference which is caused due to independent
decoding of different data streams in \cite{bol}. Thus it is clear
that dual channel matching improves the achievable rate regions as
compared to the AF strategy of \cite{bol}.

\section{Coherent MIMO Two-Way Relay Channel Capacity Region}
\label{disc}
Combining the results from Section \ref{upbound} and Section \ref{ach},
we establish the following characterization of the scaling behavior of the
capacity region of the coherent MIMO two-way relay channel.

\begin{thm} Neglecting the ${\cal O}(1)$ term,
the capacity region of the coherent MIMO two-way relay channel is given by the convex hull of
\begin{eqnarray*}
\lim_{K \rightarrow \infty}R_{12} & \mapequal{w.p. 1} & \frac{M}{2}\log(K) \\
\lim_{K \rightarrow \infty}R_{21} & \mapequal{w.p. 1} & \frac{M}{2}\log(K)
\end{eqnarray*}
where $R_1$ and $R_2$ are the rate of information transfer
between $T_1 \rightarrow T_2$ and $T_2 \rightarrow T_1$, respectively.
\end{thm}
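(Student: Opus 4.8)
The plan is to obtain this theorem as an immediate corollary of Theorem~\ref{upperbound} and Theorem~\ref{lowerbound}; the only substantive point is to argue that the two one-dimensional statements together determine the full two-dimensional region. First I would recall the converse: Theorem~\ref{upperbound} shows that for \emph{every} two-phase protocol (every $\alpha\in[0,1]$) one has $\lim_{K\to\infty}R_{12}\mapright{w.p. 1}\frac{M}{2}\log K+{\cal O}(1)$ and likewise for $R_{21}$. Hence any achievable pair $(R_{12},R_{21})$ lies, up to the ${\cal O}(1)$ term, in the square $[0,\frac{M}{2}\log K]^2$, and so does the convex hull of all achievable pairs. This gives the ``$\subseteq$'' inclusion.

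For the reverse inclusion I would invoke Theorem~\ref{lowerbound}: the dual-channel-matching AF strategy with $\alpha=\frac12$ achieves $R_{12}=\frac{M}{2}\log K+{\cal O}(1)$ \emph{and} $R_{21}=\frac{M}{2}\log K+{\cal O}(1)$ simultaneously, i.e.\ the corner $(\frac{M}{2}\log K,\frac{M}{2}\log K)$ of the square is itself an achievable rate pair. It then remains only to note that the capacity region is downward closed (from a code achieving $(R_{12},R_{21})$ one trivially obtains any $(R_{12}',R_{21}')$ with $0\le R_{12}'\le R_{12}$ and $0\le R_{21}'\le R_{21}$ by passing to a sub-codebook), so achievability of the corner point implies achievability of the whole square; being convex, this square equals its own convex hull. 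Combining the two inclusions and absorbing all additive constants into the ${\cal O}(1)$ term proves the claim.

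The only place that genuinely needs an argument --- the reason this deserves to be stated separately rather than dismissed in a line --- is that the converse of Theorem~\ref{upperbound} bounds $R_{12}$ and $R_{21}$ \emph{separately}, so a priori a rate tradeoff could force the true region to be a strict subset of the square. The content is precisely that Theorem~\ref{lowerbound} rules this out: because dual channel matching lets each of $T_1,T_2$ perfectly cancel its own self-interference while all $K$ relays serve all $2M$ data streams, the two directions do not compete and the corner of the converse is attained. Thus the ``hard part'' is not in this proof at all but was discharged in proving Theorem~\ref{lowerbound}; here one merely assembles the pieces and records the absence of a tradeoff.
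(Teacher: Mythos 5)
Your proposal is correct and follows exactly the paper's route: the paper's proof of this theorem is the single line ``Follows from Theorem \ref{upperbound} and \ref{lowerbound}.'' Your elaboration --- converse gives containment in the square, simultaneous achievability of the corner point plus downward closure gives the reverse inclusion --- simply makes explicit the assembly the paper leaves implicit, and correctly locates the only substantive content (the absence of a rate tradeoff) in Theorem \ref{lowerbound}.
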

\begin{proof}
Follows from Theorem \ref{upperbound} and \ref{lowerbound}.
\end{proof}

\begin{figure}
\centering
\includegraphics[height= 3in]{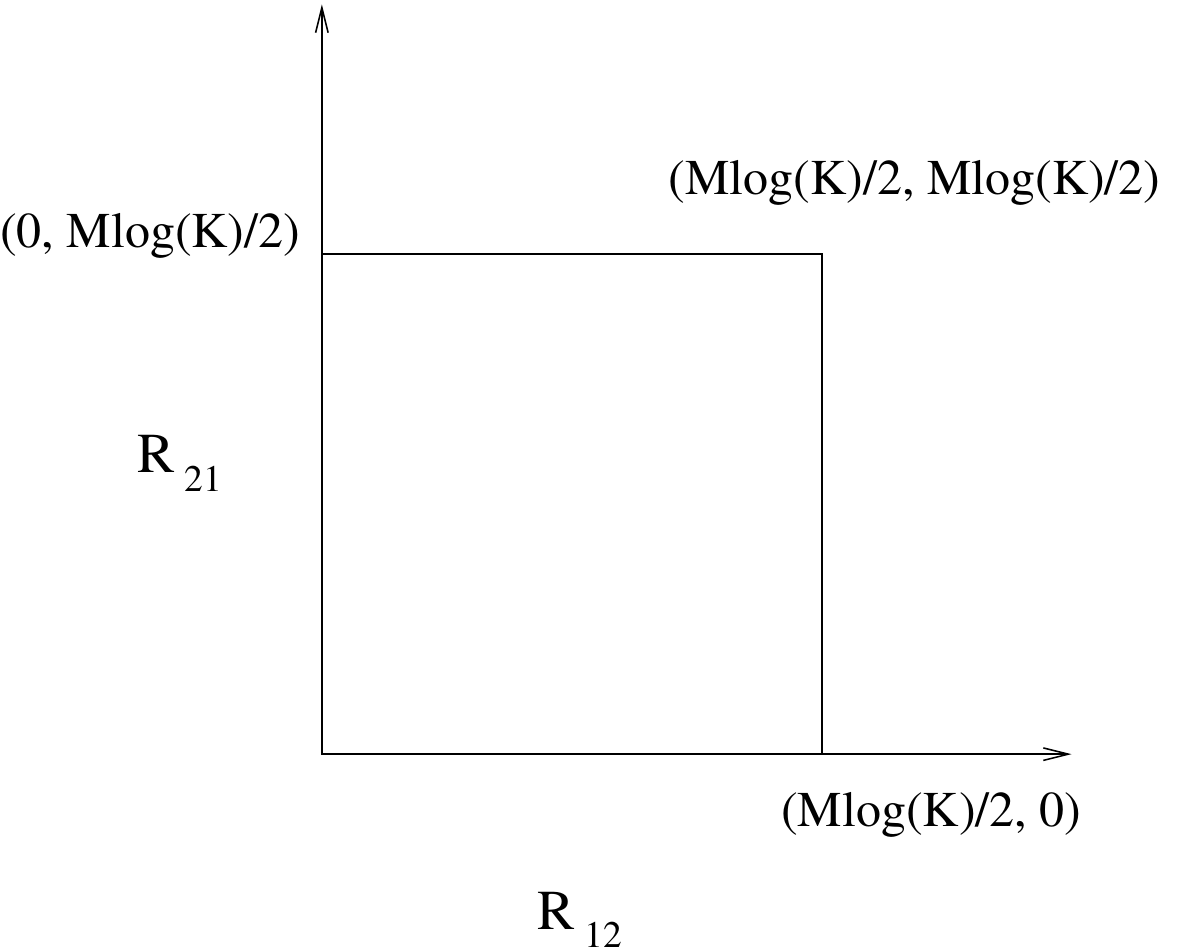}
\caption{Asymptotic Capacity of the coherent MIMO two way relaying}
\label{capplot}
\end{figure}
{\it Discussion:}
The capacity region of the coherent MIMO two-way relay channel is illustrated in
Fig. \ref{capplot}.
Combining Theorems \ref{upperbound} and \ref{lowerbound},
the sum capacity (sum of $R_{12}$ and $R_{21}$) of the coherent
MIMO two-way relay channel is given by
$M\log{(K)} + {\cal O}(1)$, as $K \rightarrow \infty$, which is exactly
double of the capacity achievable in each direction $T_1 \rightarrow T_2$
or $T_2 \rightarrow T_1$ \cite{bol}.
The ${\cal O}(1)$ term in the upper and lower bound can in general be
different and hence we characterize the exact capacity up to a
${\cal O}(1)$ term. An important implication of Theorem \ref{lowerbound}
is that with enough relays, the dual channel matching strategy
is optimal in the sense of achieving the right capacity scaling.
Therefore, neglecting the ${\cal O}(1)$ term, what this result shows is
that with the coherent MIMO two-way relay channel, one can communicate at rate $\frac{M}{2}\log{K}$ from
$T_1 \rightarrow T_2$ while simultaneously communicating at rate
$\frac{M}{2}\log{K}$ from $T_2 \rightarrow T_1$.

Recall that we obtained the lower bound in Theorem \ref{lowerbound} by
fixing $\alpha = \frac{1}{2}$ i.e. $T_1$ and $T_2$ transmit and receive for equal amount of time.
Since this lower bound is only a ${\cal O}(1)$ term away from the upper bound,
allocating equal amount of time for the transmit and the receive phase
is optimal for the coherent MIMO two-way relay channel.

From Theorem \ref{lowerbound}, it is also clear that the upper bound on
the capacity region of the MIMO two-way relay channel is achievable within
 a ${\cal O}(1)$ term without any cooperation between $T_1$ and $T_2$.
This is significant since the upper bound is for some joint encoding between $T_1$ and $T_2$. This is made possible because with channel knowledge, both $T_1$ and $T_2$ are able to cancel off the self interference.

Compared to the asymptotic capacity result for MIMO one-way relay channel
\cite{bol}, our results show that with the coherent MIMO two-way relay channel one can remove the
$\frac{1}{2}$ rate loss factor on the capacity, which comes from the half-duplex assumption on the terminals and relays. Therefore with the coherent
MIMO two-way relay channel it is possible to can achieve unidirectional full-duplex performance with half-duplex terminals.

To compute the capacity of the coherent MIMO two-way relay channel,
we assumed that CSI was available at both $T_1, T_2$ and each relay stage,
which is a very strict requirement to meet in practice.
To do dual channel matching for the coherent MIMO two-way relay channel,
the $k^{th}$ relay needs to know the realization
of ${\bf H}_k$ and ${\bf G}^{(r)}_k$ of the transmit phase and
the realization of ${\bf G}_k$ and ${\bf H}^{(r)}_k$ of the receive phase.
To cancel the self-interference
and to detect the incoming signal in the receive phase,
 both the terminals $T_1$ and $T_2$ need to know the realization of
$\{{\bf H}_k,{\bf G}_k\} \ \forall k$ for the transmit phase and
$\{{\bf H}^{(r)}_k, {\bf G}^{(r)}_k\} \ \forall k$ for the receive phase.
In practice, this is a very strict and challenging requirement, but by
sending training sequence and using standard channel estimation techniques
together with intelligent channel information feedback algorithms, all the nodes can learn
the required receive channel coefficients with good enough accuracy.

For example, by sending training sequences from $T_1$ and $T_2$,
the $k^{th}$ relay can learn ${\bf H}_k$ and ${\bf G}^{(r)}_k$
in the transmit phase (when  $T_1$ and $T_2$ transmit signals to all the relays).
Learning the ${\bf G}_k$ and ${\bf H}^{(r)}_k$ realization at the $k^{th}$ relay for the receive phase
(when all the relays transmit and both  $T_1$ and $T_2$ receive)
 is more
challenging. In time-division duplex system, however, by employing calibration
at transmitter and receiver, the forward and backward channel
can be assumed to be reciprocal in which case the
realization of ${\bf H}_k^{(r)}$ and ${\bf G}_k$ for the receive phase
is approximately equal to the realization of
${\bf H}_k^T$ and ${\bf G}^{(r)T}_k$ for the transmit phase.
Instead if a frequency-division duplex (FDD) system is used, assuming
block fading channel,
${\bf G}_k$ and ${\bf H}^{(r)}_k$ can be learnt at each relay
for receive phase, by feeding back the information about
${\bf G}_k$ and ${\bf H}^{(r)}_k$ from $T_1$ and $T_2$ in the transmit phase,
learnt in the last receive phase at  $T_1$ and $T_2$.

 To decode the incoming signal and to cancel the self interference,
$T_1$ and $T_2$ needs to know the realization of $ {\bf G}_k, {\bf
H}_k$ of the transmit phase and the realization of ${\bf G}^{(r)}_k,
{\bf H}^{(r)}_k $ of the receive phase. By sending training
sequences from all the relays to $T_1$ and $T_2$, $T_1$ and $T_2$
can learn the realization of  ${\bf H}^{(r)}_k, {\bf G}_k$ of the
receive phase, respectively. At the start of receive phase each
relay knows the realization of $ {\bf G}_k, {\bf H}_k ,  {\bf
G}^{(r)}_k, {\bf H}^{(r)}_k$, therefore if each relay transmits
quantized channel information about $ {\bf G}_k, {\bf H}_k ,  {\bf
G}^{(r)}_k, {\bf H}^{(r)}_k$ using strategies such as Grassmannian
codebook \cite{love} etc. to $T_1$ and $T_2$, both $T_1$ and $T_2$
can learn the required CSI in the receive phase.

\section{Non-Coherent MIMO Two-Way Relay Channel}
\label{noncoh} In the last section we derived the scaling behavior of the
capacity region of the MIMO two-way relay channel when both $T_1$ and $T_2$ have receive CSI while
all the relays have perfect transmit and receive CSI.
It is well known, however, that acquiring accurate CSI in a
real-time communication is a challenging problem (large overhead and
complexity) and guaranteeing near perfect CSI is almost impossible in
practice. Therefore in this section we study the scaling behavior of
the capacity region of the MIMO two-way relay channel when CSI is only available at $T_1$ and $T_2$
in the receive phase and no CSI is available at any of the
relays.
Furthermore, for this case we fix $\alpha = \frac{1}{2}$, i.e.
$T_1$ and $T_2$ transmit and receive for equal amount of time
(transmit phase is equal to receive phase).

For the non-coherent MIMO two-way relay channel, we first upper bound the
achievable rates from $T_1 \rightarrow T_2$ and $T_2 \rightarrow T_1$
using the cut-set bound for the multiple access cut. Then
using a simple AF strategy at each relay, we compute the
achievable rates from $T_1 \rightarrow T_2$ and $T_2 \rightarrow T_1$
which are shown to be within a ${\cal O}(1)$ term  from
the upper bound in the high signal to noise (SNR) regime,
thereby characterizing high SNR capacity.

\subsection {Upper Bound on The Capacity Region of The Non-Coherent Two-Way
Relay Channel}
\label{noncohup} As proved in the section \ref{upbound}, the rate of
information transfer from $T_1 \rightarrow T_2$ ($T_2 \rightarrow
T_1$) is upper bounded by the rate of information transfer between
all-relays put together and $T_2 (T_1)$ (multiple access cut). We
evaluate this upper bound in the following Theorem, when CSI is not
available at any of the relay.
\begin{thm}
\label{noncoherentdf}
In the high-SNR regime (large $P_R$), the rates $R_{12}$ and $R_{21}$ from
$T_1 \rightarrow T_2$ and $T_2 \rightarrow T_1$ for the non-coherent MIMO
two-way relay channel are upper bounded by
\begin{eqnarray*}
\lim_{K\rightarrow \infty}R_{12} & \mapright{w.p.1}& \frac{M}{2}\log{\left(P_R\right)} + {\cal O}(1)  \\
\lim_{K\rightarrow \infty}R_{21} &\mapright{w.p.1}& \frac{M}{2}\log{\left(P_R\right)} + {\cal O}(1),
\end{eqnarray*}
where $P_R$ is the total power constraint across all relays.
\end{thm}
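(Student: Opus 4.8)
The plan is to reuse the multiple-access-cut argument of Section~\ref{upbound} verbatim for everything that does not depend on the relays' channel knowledge, and then to close the bound with the one fact that is new in the non-coherent setting: no relay observes any channel coefficient. The steps of Section~\ref{upbound} leading to (\ref{t1t2mac}), (\ref{t2t1mac}), and the subsequent reduction by the time-sharing factor $(1-\alpha)$ as in (\ref{finalupboundmac1})--(\ref{finalupboundmac2}), never invoke relay CSI, so with $\alpha=\tfrac12$ they carry over and give
\[R_{12}\ \le\ \tfrac12\,I\!\left(\bt_1,\bt_2,\ldots,\bt_K;\by\right),\qquad R_{21}\ \le\ \tfrac12\,I\!\left(\bt_1,\bt_2,\ldots,\bt_K;\bv\right),\]
with $\by,\bv$ as in (\ref{t2rx}),~(\ref{t1rx}). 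It then suffices to bound $I(\bt_1,\ldots,\bt_K;\by)$; the bound on $I(\bt_1,\ldots,\bt_K;\bv)$ follows by replacing $(\bG_k,P_k,\by,\bz)$ by $(\bH_k^{(r)},Q_k,\bv,\bw)$ throughout.

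First I would bring in the non-coherent assumption. Because the relays see no channel, the joint law of $(\bt_1,\ldots,\bt_K)$ is independent of the forward channels $\{\bG_k,P_k\}_{k=1}^{K}$; handing those channels to $T_2$ as a genie therefore leaves that law unchanged and can only enlarge the mutual information, so $I(\bt_1,\ldots,\bt_K;\by)\le I(\bt_1,\ldots,\bt_K;\by\mid\{\bG_k,P_k\})$. Stacking, $\by=\Psi\bs+\bz$ with $\Psi\bydef[\sqrt{P_1}\bG_1\ \cdots\ \sqrt{P_K}\bG_K]$ and $\bs\bydef[\sqrt{\gamma_1}\bt_1^{T}\ \cdots\ \sqrt{\gamma_K}\bt_K^{T}]^{T}$, whose second-moment matrix $\Sigma\bydef\bbE\{\bs\bs^{*}\}$ satisfies $\tr(\Sigma)=\sum_k\gamma_k\le P_R$ and --- the crucial point --- does not depend on $\Psi$. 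Since the noise is Gaussian and independent of $\bs$, replacing $\bs$ by a Gaussian vector with covariance $\Sigma$ only increases $h(\by\mid\{\bG_k,P_k\})$, so
\[I(\bt_1,\ldots,\bt_K;\by)\ \le\ \bbE_{\Psi}\!\left[\log\det\!\left(\bI_M+\frac{1}{\sigma^{2}}\,\Psi\Sigma\Psi^{*}\right)\right].\]

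Then I would finish with concavity of $\log\det$ and a one-line moment computation. Jensen's inequality bounds the last quantity by $\log\det\!\left(\bI_M+\tfrac{1}{\sigma^{2}}\bbE_{\Psi}\{\Psi\Sigma\Psi^{*}\}\right)$. Writing $\Psi\Sigma\Psi^{*}=\sum_{j,k}\sqrt{P_jP_k}\,\bG_j\Sigma_{jk}\bG_k^{*}$ in terms of the $N\times N$ blocks $\Sigma_{jk}$ of $\Sigma$, and using that the $\bG_k$ are independent, zero mean and independent of the bounded i.i.d.\ $P_k$, every $j\ne k$ term vanishes in expectation, while $\bbE\{\bG_k\Sigma_{kk}\bG_k^{*}\}=\tr(\Sigma_{kk})\,\bI_M$; hence $\bbE_{\Psi}\{\Psi\Sigma\Psi^{*}\}=\mu\,\tr(\Sigma)\,\bI_M\preceq\mu P_R\,\bI_M$ with $\mu\bydef\bbE\{P_k\}$. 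Therefore
\[I(\bt_1,\ldots,\bt_K;\by)\ \le\ M\log\!\left(1+\frac{\mu P_R}{\sigma^{2}}\right)\ =\ M\log P_R+{\cal O}(1)\qquad(P_R\to\infty),\]
and halving gives $\lim_{K\to\infty}R_{12}\mapright{w.p. 1}\tfrac{M}{2}\log P_R+{\cal O}(1)$, with the identical argument for $R_{21}$ (now using $\bbE\{Q_k\}$ in place of $\mu$, again finite by the i.i.d.\ path-loss assumption).

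The step I expect to be delicate is the one where the non-coherent assumption is actually exploited: recording that, with no relay CSI, the relay transmit covariance $\Sigma$ is \emph{channel-blind}. This is precisely what forbids the channel-matched beamforming / water-filling that produced the $\log K$ array gain in the coherent converse of Theorem~\ref{upperbound}, leaving only the flat power budget $\mu P_R$; if instead one relaxed the multiple-access cut in the naive way, the ``input'' could silently adapt to $\Psi$ and one would recover only the much weaker $\tfrac{M}{2}\log K$ bound. One must therefore phrase the non-coherent constraint as statistical independence of $(\bt_1,\ldots,\bt_K)$ from $\{\bG_k,P_k\}$ rather than as a per-realization restriction, and then carry the Jensen/moment step for an arbitrary, possibly non-block-diagonal $\Sigma$. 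Everything else is routine; note that, unlike the coherent converse, the strong law of large numbers is not needed above --- the bound holds for every $K$, hence in the limit --- although the per-realization ``with probability $1$'' form follows in the same way as in Theorem~\ref{upperbound} by first reducing to an isotropic $\Sigma$ (optimal, by the rotational symmetry of the i.i.d.\ Gaussian channel) and then applying $\tfrac1K\Psi\Psi^{*}\xrightarrow{w.p. 1}\mu N\,\bI_M$.
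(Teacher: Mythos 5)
Your proposal follows the same skeleton as the paper's proof --- multiple-access cut with $\alpha=\tfrac12$, Gaussian maximization of the relays' joint input, reduction to a $\log\det$ involving $\Psi\Sigma\Psi^{*}$, and the key observation that non-coherence forces the relay covariance to be channel-blind --- but it executes the final evaluation differently. The paper invokes Telatar's no-transmit-CSI result to assert that the isotropic covariance ${\bf Q}=\frac{P_R}{NK}{\bf I}_{NK}$ is optimal, substitutes it, and then evaluates $\frac1K\sum_{k}P_k{\bf G}_k{\bf G}_k^{*}\xrightarrow{w.p.1}\mu N{\bf I}_M$ by the strong law of large numbers; you instead keep an arbitrary channel-blind $\Sigma$ and close with Jensen's inequality plus the second-moment identity $\bbE\{\Psi\Sigma\Psi^{*}\}=\mu\,\mathrm{tr}(\Sigma){\bf I}_M$. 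Your route buys two things: it sidesteps having to justify isotropic optimality for the block-scaled (non-i.i.d.-column) matrix $[\sqrt{P_1}{\bf G}_1\ \cdots\ \sqrt{P_K}{\bf G}_K]$, and it yields a clean bound valid for every finite $K$. What it costs is that Jensen averages over the channel, so what you prove directly is an ergodic (ensemble-average) bound, whereas the theorem is stated per channel realization with probability $1$; you notice this yourself and, to recover the w.p.1 form, fall back on essentially the paper's device (reduce to isotropic $\Sigma$ by symmetry, then apply the SLLN to $\frac1K\Psi\Psi^{*}$). That last step deserves one more sentence of care --- the almost-sure statement must hold for the covariance sequence $\Sigma_K$ actually used by the scheme, not only for the isotropic one --- but this is a looseness the paper's own proof shares rather than a gap you introduce. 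Finally, your explicit genie step $I({\bf t}_1,\ldots,{\bf t}_K;{\bf y})\le I({\bf t}_1,\ldots,{\bf t}_K;{\bf y}\mid\{{\bf G}_k,P_k\})$, justified by the independence of the relay outputs from the receive-phase channels, makes precise something the paper leaves implicit (its mutual informations are silently conditioned on the receive CSI at $T_2$), which is a genuine improvement in the exposition.
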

\begin{proof}
Using the multiple access cut-set bound from (\ref{t1t2mac}) and (\ref{t2t1mac}), we have
\[R_{12} \le I({\bf t}_1, {\bf t}_2, \ldots, {\bf t}_K ; {\bf y})\] and
\[R_{21} \le I({\bf t}_1, {\bf t}_2, \ldots, {\bf t}_K ; {\bf v}),\]
for some joint distribution $p\left({\bf t}_1, {\bf t}_2, \ldots, {\bf t}_K\right)$ and with no CSI at any relay.
Recall from (\ref {t2rx}) that the received signal ${\bf y}$ is
given by
\[{\bf y} = \sum_{k=1}^{K}\sqrt{\gamma_kP_k}{\bf G}_k{\bf t}_k + {\bf z}
\]
with power constraint $\sum_{k=1}^K\gamma_k\le P_R$.
Using the capacity result from Section $4.1$ \cite{tel}
for no transmit CSI
\[I({\bf t}_1, {\bf t}_2, \ldots, {\bf t}_K ; {\bf y})
 \le \log\det\left({\bf I}_M + \frac{{\bf \Sigma Q\Sigma}^{*}}{\sigma^2}\right)\]
where
\[{\bf \Sigma} = [\sqrt{P_1}{\bf G}_1 \sqrt{P_2}{\bf G}_2 \ldots \sqrt{P_K}{\bf G}_K] \in \bbC^{M\times
NK}\] and
${\bf Q}$ is the covariance matrix of \[[\sqrt{\gamma_1}{\bf t}_1 \sqrt{\gamma_2}{\bf t}_2 \ldots \sqrt{\gamma_K}{\bf t}_K]^T \in \bbC^{NK\times 1}\] with
$[{\bf t}_1 \ {\bf t}_2 \  \ldots \ {\bf t}_K]^T$ circularly symmetric
complex Gaussian and equivalent power constraint of $tr({\bf Q})\le P_R$
and the maximum is achieved when
${\bf Q} = \frac{P_R}{NK}{\bf I}_{NK\times NK}$.
Therefore, using ${\bf Q} = \frac{P_R}{NK}{\bf I}_{NK\times NK}$
\[R_{12} \le \log\det\left({\bf I}_M + \frac{P_R}{NK\sigma^2}\sum_{k=1}^KP_k{\bf G}_k{\bf G}_k^{*}\right).\]
From the strong law of large numbers,
\[\lim_{K \rightarrow \infty}\frac{1}{K}\sum_{k=1}^KP_k{\bf G}_k{\bf G}_k^{*}
\xrightarrow{w.p.1} \bbE\left\{P_k{\bf G}_k{\bf G}_k^{*}\right\} =\bbE\left\{P_k\right\}\bbE\left\{{\bf G}_k{\bf G}_k^{*}\right\}.\]
Since $\bbE\left\{{\bf G}_k{\bf G}_k^{*}\right\} = N{\bf I}_M$ and
let $\mu\bydef \bbE\left\{P_k\right\}$,
\[R_{12} \mapright{w.p.1} \log\det\left({\bf I}_{M} + \frac{P_R\mu}{\sigma^2}{\bf I}_M\right).\]
Since $\mu$ and $\sigma^2$ are finite, for large $P_R$
\[R_{12} \mapright{w.p.1} M\log{P_R} + {\cal O}(1).\]
Since $T_1$ and $T_2$ transmit only
for half the time ($\alpha = \frac{1}{2}$) in any given time slot
\[R_{12} \mapright{w.p.1} \frac{M}{2}\log{P_R} + {\cal O}(1).\]
Similarly, it can be shown that
\[R_{21} \mapright{w.p.1} \frac{M}{2}\log{P_R} + {\cal O}(1).\]
\end{proof}
\subsection {Lower Bound on The Capacity Region of The Non-coherent MIMO Two-Way Relay Channel}
In this subsection we compute achievable rates $R_{12}$ and $R_{21}$
for the non-coherent MIMO two-way relay channel using a simple AF
strategy at each relay. The strategy is the following:
with no CSI at any relay,
each relay just normalizes the received signal to meet its power constraint
and retransmits it in the receive phase. With CSI available at each
destination $T_1$ ($T_2$), self interference generated by $T_1$ ($T_2$)
is removed from the received signal and the equivalent channel between
$T_1 \rightarrow T_2$ ($T_2 \rightarrow T_1$)
for the non-coherent MIMO two-way relay channel is given by
$\sum_{k=1}^K{\bf H}_k{\bf G}_k$ $\left(\sum_{k=1}^K{\bf H}^{(r)}_k{\bf
G}^{(r)}_k\right)$. As $K\rightarrow \infty$, this channel is shown to
behave as i.i.d. MIMO Gaussian channel. Then by using the
capacity results from \cite{tel}, we lower bound the capacity region of the
non-coherent MIMO two-way relay channel. We show that with approximately
same power used at $T_1 (T_2)$ and all relays (i.e. $P\approx P_R$), the
lower bound meets the upper bound in the high SNR regime (high $P$).

\label{noncohaf}
The following Theorem gives the expressions for achievable $R_{12}$ and $R_{21}$
pair, when each relay uses AF.
\begin{thm}
\label{noncoherent}
In the high SNR regime, the achievable rate region
for the non-coherent MIMO two-way relay channel using AF
strategy at each relay, is given by
\begin{eqnarray*}
\lim_{K\rightarrow\infty}R_{12} &\mapequal{w.p.1}& \frac{M}{2}\log{\left(P_R\right)} + {\cal O}(1) \\
\lim_{K\rightarrow\infty}R_{21} &\mapequal{w.p.1}& \frac{M}{2}\log{\left(P_R\right)} + {\cal O}(1).
\end{eqnarray*}
\end{thm}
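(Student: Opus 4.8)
{\it Outline of the proof.} The plan is to analyze the amplify-and-forward scheme already described --- each relay merely rescales its received vector to respect its power budget and retransmits it --- to compute the end-to-end MIMO channel it induces between $T_1$ and $T_2$ (and, symmetrically, between $T_2$ and $T_1$), and to show that, as $K\to\infty$, this channel converges to one with i.i.d.\ Gaussian gains, so that the point-to-point MIMO capacity formula of \cite{tel} delivers the rate. Since no relay has CSI, relay $k$ transmits $\bt_k=\br_k/\sqrt{\beta_k}$ with normalizing constant $\beta_k=\bbE\{\br_k^*\br_k\}=N\big(P(E_k+F_k)+\sigma^2\big)$ (using $\bbE\{\bH_k^*\bH_k\}=\bbE\{\bG_k^{(r)*}\bG_k^{(r)}\}=N\bI_M$ and $\bbE\{\bx^*\bx\}=\bbE\{\bu^*\bu\}=M$). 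Substituting (\ref{relayrx}) into (\ref{t2rx}), using the equal-power allocation $\gamma_k=P_R/K$, and exploiting that $T_2$ knows $\bu$ together with all of $\{\bG_k,\bG_k^{(r)},P_k,F_k,\beta_k\}$ to cancel its self-interference term, the observation left at $T_2$ is $\by'=\widetilde{\bA}\,\bx+\widetilde{\bn}$, where
\[\widetilde{\bA}=\sqrt{\tfrac{P_R P}{KM}}\sum_{k=1}^{K}\sqrt{\tfrac{P_k E_k}{\beta_k}}\;\bG_k\bH_k,\qquad \widetilde{\bn}=\sqrt{\tfrac{P_R}{K}}\sum_{k=1}^{K}\sqrt{\tfrac{P_k}{\beta_k}}\;\bG_k\bn_k+\bz .\]

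Next I would take $K\to\infty$. Conditioned on the channels, $\widetilde{\bn}$ is zero-mean complex Gaussian --- a linear image of the independent vectors $\bn_1,\ldots,\bn_K,\bz$ --- and by the strong law of large numbers its covariance $\tfrac{P_R\sigma^2}{K}\sum_k\tfrac{P_k}{\beta_k}\bG_k\bG_k^*+\sigma^2\bI_M$ converges with probability $1$ to $\sigma^2\big(1+P_R\,\bbE\{P_k/(P(E_k+F_k)+\sigma^2)\}\big)\bI_M$, i.e.\ to a scalar multiple of $\bI_M$. The signal matrix is where the argument must depart from the coherent case: because $\bbE\{\bG_k\bH_k\}=\bbE\{\bG_k\}\bbE\{\bH_k\}=\mathbf{0}$, the law of large numbers only yields $\widetilde{\bA}\to\mathbf{0}$ in mean, so instead one applies the multivariate central limit theorem to the $M^2$ entries of $\tfrac{1}{\sqrt K}\sum_k\sqrt{P_k E_k/\beta_k}\,\bG_k\bH_k$: the summands are i.i.d.\ in $k$, zero mean, circularly symmetric, and square integrable (each entry is a length-$N$ sum of products of two ${\cal CN}(0,1)$ variables times a bounded path-loss factor). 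A short moment computation gives $\bbE\{(\bG_k\bH_k)_{ij}\,\overline{(\bG_k\bH_k)_{i'j'}}\}=N\,\delta_{ii'}\delta_{jj'}$ and zero pseudo-covariance, so the limit has i.i.d.\ entries and $\widetilde{\bA}$ converges in distribution to $\sqrt{\tfrac{P_R P}{M}\,\bbE\{P_k E_k/(P(E_k+F_k)+\sigma^2)\}}\;\bW$, where $\bW$ has i.i.d.\ ${\cal CN}(0,1)$ entries. In words, as $K\to\infty$ the two-way relay channel presents two interference-free i.i.d.\ Rayleigh $M\times M$ MIMO channels.

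Finally I would invoke the coherent-reception MIMO capacity result (Section $3.1$ of \cite{tel}; applicable because $T_2$ knows $\widetilde{\bA}$ and the residual noise is Gaussian with known covariance) with a proper complex Gaussian input of covariance proportional to $\bI_M$ --- the proportionality constant affects only the ${\cal O}(1)$ term --- and $\alpha=\tfrac12$. Letting $\rho$ be the ratio of the limiting per-stream signal power to the limiting noise power, this gives
\[\lim_{K\to\infty}R_{12}\mapequal{w.p.1}\tfrac12\log\det\big(\bI_M+\rho\,\bW\bW^*\big),\]
and, taking $P$ and $P_R$ large with $P_R\asymp P$, one finds $\rho=\Theta(P_R)$: the $\Theta(P)$ growth of $\beta_k$ in the denominators of both the effective signal gain and the amplified relay-noise variance is exactly what prevents $\rho$ from scaling like $P_R^2$. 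Since $\bW\bW^*$ is almost surely of full rank, the large-$\rho$ expansion $\log\det(\bI_M+\rho\bW\bW^*)=M\log\rho+\log\det(\bW\bW^*)+o(1)$ yields $\lim_{K\to\infty}R_{12}\mapequal{w.p.1}\tfrac{M}{2}\log P_R+{\cal O}(1)$; running the same argument with $(\bH_k,\bG_k,E_k,P_k)$ replaced by $(\bG_k^{(r)},\bH_k^{(r)},F_k,Q_k)$ and $(\by,\bz)$ by $(\bv,\bw)$ --- i.e.\ with (\ref{t1rx}) in place of (\ref{t2rx}) --- gives the matching bound for $R_{21}$, and together with Theorem \ref{noncoherentdf} this pins down the high-SNR capacity region.

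I expect the central-limit step for $\widetilde{\bA}$ to be the crux, and the chief point where the proof differs from that of Theorem \ref{lowerbound}: in the coherent dual-channel-matching argument the per-relay matrix $\bG_k(\bG_k^*\bH_k^*+\bH_k^{(r)*}\bG_k^{(r)*})\bH_k$ has a nonzero mean, so $\hat{\bA}/K$ converges to a deterministic multiple of $\bI_M$ by the strong law alone, whereas here the centered product $\bG_k\bH_k$ forces a genuine matrix-valued CLT, after which one must check that the limiting covariance is ``white'' across all $M^2$ entries (the $\delta_{ii'}\delta_{jj'}$ identity and vanishing pseudo-covariance) and accept that the resulting ${\cal O}(1)$ term --- essentially $\tfrac12\log\det(\bW\bW^*)$ plus a constant --- is a bounded random variable, which is precisely why the equalities hold only with probability $1$. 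A secondary point requiring care is the power bookkeeping that makes $\rho$ scale linearly in $P_R$, which leans on $P_R\asymp P$ and on the $\Theta(P)$ growth of the normalization constants $\beta_k$.
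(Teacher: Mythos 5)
Your proposal is correct and follows essentially the same route as the paper: normalize-and-forward at each relay, cancel self-interference at $T_1$ and $T_2$ using receive CSI, show the effective $M\times M$ channel becomes i.i.d.\ Gaussian as $K\to\infty$, and apply the point-to-point MIMO capacity formula of \cite{tel} at high SNR with $P\approx P_R$. The only difference is that where you carry out the matrix-valued CLT for $\frac{1}{\sqrt{K}}\sum_k\sqrt{P_kE_k/\beta_k}\,\bG_k\bH_k$ explicitly (checking the $\delta_{ii'}\delta_{jj'}$ covariance and vanishing pseudo-covariance), the paper simply cites Theorem 3 of \cite{bol} for that convergence and then states the limiting rate as the ergodic expectation $\bbE_{\bH_w}\{\log\det(\cdot)\}$ rather than as the random variable $\log\det(\bI_M+\rho\bW\bW^*)$.
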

\begin{proof}
Recall from (\ref{relayrx}) that the received signal at each relay is
given by
\begin{equation}
{\bf r}_k =  \sqrt{\frac{PE_k}{M}} {\bf H}_{k}{\bf x} + \sqrt{\frac{PF_k}{M}}{\bf G}_{k}^{(r)}{\bf u} + {\bf n}_k
\end{equation}
Therefore the average received signal plus noise power at each relay
 is given by $N(P(E_k + F_k) + \sigma^2)$.
We assume that the $k^{th}$ relay knows the average received signal plus noise
power $N(P(E_k + F_k) + \sigma^2)$ and transmits
${\bf t}_k = \left(\frac{1}{N(P( E_k + F_k) +
\sigma^2))}\right)^{\frac{1}{2}}{\bf r}_k$ to ensure that
${\bbE}\{{\bf t}_k^*{\bf t}_k\} = 1$.
With this normalization, from (\ref{t1rx}) and (\ref{t2rx}),
the received signal at terminal $T_1$ and $T_2$ is given by ${\bf v}$ and
${\bf y}$, respectively, where
\[{\bf v} =  \sum_{k=1}^K\sqrt{\frac{\gamma_kQ_k}{N(P(E_k+ F_k)+\sigma^2)}}{\bf H}_k^{(r)}
{\bf r}_k + {\bf w}\]
\[{\bf y} = \sum_{k=1}^K\sqrt{\frac{\gamma_kP_k}{N(P(E_k+ F_k)+\sigma^2)}}{\bf G}_k{\bf r}_k + {\bf z}.\]
Substituting for ${\bf r}_k$ from (\ref{relayrx}) in the above equation
\begin{eqnarray*}
{\bf y} & = & \sum_{k=1}^K\sqrt{\frac{\gamma_kPP_kE_k}{NM(P(E_k+ F_k)+\sigma^2)}}{\bf G}_k{\bf H}_k{\bf x} \\
& & +\sum_{k=1}^K\sqrt{\frac{\gamma_kPP_kF_k}{NM(P(E_k+ F_k)+\sigma^2)}}{\bf G}_k{\bf G}_k^{(r)}{\bf u}\\
& & +\sum_{k=1}^K\sqrt{\frac{\gamma_kP_k}{N(P(E_k+ F_k)+\sigma^2)}}{\bf G}_k{\bf n}_k + {\bf z}.
\end{eqnarray*}
Since $T_2$ knows ${\bf u}$ and has perfect CSI, it can cancel the self
interference. Removing the self interference from ${\bf y}$ and dividing both sides by $\sqrt{K}$,
\begin{eqnarray*}
\label{tx2noncon}
{\bf y}' & = & \underbrace{ \frac{1}{\sqrt{K}} \sum_{k=1}^K\sqrt{\frac{\gamma_kPP_kE_k}{NM(P(E_k+ F_k)+\sigma^2)}}{\bf G}_k{\bf H}_k}_{{\bf A}}{\bf x} \\
& & +\underbrace{\frac{1}{\sqrt{K}}\sum_{k=1}^K\sqrt{\frac{\gamma_kP_k}{N(P(E_k+ F_k)+\sigma^2)}}{\bf G}_k{\bf n}_k + \frac{1}{\sqrt{K}}{\bf z}}_{{\bf b}}.\\
\end{eqnarray*}
Similarly $T_1$ knows ${\bf x}$ and also has perfect CSI, therefore it can
also remove the self interference.
Removing the self interference from ${\bf v}$ and dividing both sides by $\sqrt{K}$
\begin{eqnarray*}
\label{tx2noncon}
{\bf v}' & = & \underbrace{\frac{1}{\sqrt{K}}\sum_{k=1}^K\sqrt{\frac{\gamma_kPQ_kF_k}{NM(P(E_k+ F_k)+\sigma^2)}}{\bf H}_k^{(r)}{\bf G}_k^{(r)}}_{{\bf C}}{\bf u} \\
& & +\underbrace{\frac{1}{\sqrt{K}}\sum_{k=1}^K\sqrt{\frac{\gamma_kQ_k}{N(P(E_k+ F_k)+\sigma^2)}}{\bf H}_k^{(r)}{\bf n}_k + \frac{1}{\sqrt{K}}{\bf w}}_{{\bf d}}.\\
\end{eqnarray*}
As $K \rightarrow \infty$, it can be shown that (Theorem 3 \cite{bol})
\[{\bf A}_{i,j} \sim {\cal CN}\left (0, \frac{1}{K}\sum_{k=1}^K{\bbE}\left\{\frac{\gamma_kPP_kE_k}{M(P(E_k+ F_k)+\sigma^2)}\right\}\right)\]
\[{\bf C}_{i,j} \sim {\cal CN}\left (0, \frac{1}{K}\sum_{k=1}^K{\bbE}\left\{\frac{\gamma_kPQ_kF_k}{M(P(E_k+ F_k)+\sigma^2)}\right\}\right)\]
\[{\bf b}_{i} \sim {\cal CN}\left (0, \frac{\sigma^2}{K}\left(\sum_{k=1}^K{\bbE}\left\{\frac{\gamma_kP_k}{(P(E_k+ F_k)+\sigma^2)}\right\}+1\right)\right)\]
\[{\bf d}_{i} \sim {\cal CN}\left (0, \frac{\sigma^2}{K}\left(\sum_{k=1}^K{\bbE}\left\{\frac{\gamma_kQ_k}{(P(E_k+ F_k)+\sigma^2)}\right\}+1\right)\right)\]
and
\[R_{\bf A} \xrightarrow{w.p. 1} \frac{1}{K}\sum_{k=1}^K{\bbE}\left\{\frac{\gamma_kPP_kE_k}{M(P(E_k+ F_k)+\sigma^2)}\right\}{\bf I}_{M^2}\]
\[R_{\bf C} \xrightarrow{w.p. 1} \frac{1}{K}\sum_{k=1}^K{\bbE}\left\{\frac{\gamma_kPQ_kF_k}{M(P(E_k+ F_k)+\sigma^2)}\right\}{\bf I}_{M^2}\]
where ${\bf A}_{i,j}, {\bf C}_{i,j}$ denotes $i^{th}$ row and $j^{th}$ column
entry of
${\bf A}$ and ${\bf C}$ respectively and ${\bf b}_{i}, {\bf d}_{i}$ denotes the $i^{th}$ element of
${\bf b}$ and ${\bf d}$ respectively, $R_{\bf A} = \bbE\{{\bf a}{\bf a}^*\}$ where ${\bf a} = {\text vec}({\bf A})$ and $R_{\bf C} = \bbE\{{\bf c}{\bf c}^*\}$ where ${\bf c} = {\text vec}({\bf C})$.

This shows that the channel matrices ${\bf A}, {\bf C}$ and the noise vectors
${\bf b}, {\bf d}$ are i.i.d. Gaussian, therefore using results from
Section $4.1$ \cite{tel} with only receive CSI and no transmit CSI,
the achievable rate $R_{12}$ ($R_{21}$) of the $T_1 \rightarrow T_2$
($T_2 \rightarrow T_1$) link for $\alpha = \frac{1}{2}$, is given by
\[\lim_{K \rightarrow \infty}R_{12} \mapequal{w.p. 1} \ \frac{1}{2}{\bbE}_{{\bf H}_w}\left\{\log \det\left({\bf I}_M + \frac{\rho_1}{M}{{\bf H}_w}{\bf H}_w^*\right)\right\} \]

\[\lim_{K \rightarrow \infty}R_{21} \mapequal{w.p. 1} \ \frac{1}{2}{\bbE}_{{\bf H}_w}\left\{\log \det\left({\bf I}_M + \frac{\rho_2}{M}{{\bf H}_w}{\bf H}_w^*\right)\right\}\]
where ${\bf H}_w$ is an $M \times M$ matrix with i.i.d.
${\cal CN}(0,1)$ entries and
\[\rho_1 = \frac{\frac{1}{K}\sum_{k=1}^K{\bbE}\left\{\frac{\gamma_kPP_kE_k}{(P(E_k+F_k)+\sigma^2)}\right\}}{\frac{\sigma^2}{K}(\sum_{k=1}^K{\bbE}\left\{\frac{\gamma_kP_k}{(P(E_k+F_k) + \sigma^2)}\right\}+1)},\]
\[\rho_2 = \frac{\frac{1}{K}\sum_{k=1}^K{\bbE}\left\{\frac{\gamma_kPQ_kF_k}{(P(E_k+F_k)+\sigma^2)}\right\}}{\frac{\sigma^2}{K}(\sum_{k=1}^K{\bbE}\left\{\frac{\gamma_kQ_k}{(P(E_k+F_k) + \sigma^2)}\right\}+1)}.\]
Note that $\rho_1$ and $\rho_2$ are effective SNRs.
Denoting $\mu =  {\bbE}\{E_k\} =  {\bbE}\{F_k\} = {\bbE}\{P_k\} =  {\bbE}\{Q_k\}\ \forall k$, and $\frac{1}{\eta} = {\bbE}\{\frac{E_k}{P(E_k+F_k) + \sigma^2}\}
 = {\bbE}\{\frac{F_k}{P(E_k+F_k) + \sigma^2}\} \ \forall k$,
\[\rho_1 = \rho_2 = \frac{\frac{P\mu}{\eta}\sum_{k=1}^K\gamma_k}
{\sigma^2(\frac{1}{\eta}\sum_{k=1}^K\gamma_k + 1)}.\]
Since the relay power is constrained by $\sum_{k=1}^K\gamma_k = P_R$
\[\rho_1 = \rho_2 = \frac{PP_R\mu}{\sigma^2(P_R\eta_2 + \eta )}.\]
Choosing $P \approx P_R$,
\[\rho_1 = \rho_2 \approx \frac{P_R}{\sigma^2}\] since $E_k, F_k, P_k, Q_k.$
$\forall k$ are bounded.
Therefore
\[\lim_{K \rightarrow \infty}R_{12} = \lim_{K \rightarrow \infty} R_{21} \mapequal{w.p. 1}
{\bbE}_{{\bf H}_w}
\left\{\log \det\left({\bf I}_M + \frac{P_R}{M\sigma^2}{{\bf H}_w}{\bf H}_w^*
\right)\right\}.\]
In high SNR regime $P\approx P_R \gg 1$, from \cite{tel}, it follows that
\begin{eqnarray*}
\lim_{K \rightarrow \infty}R_{12} & \mapequal{w.p. 1} & \frac{M}{2}\log{\left(P_R\right)} + {\cal O}(1) \\
\lim_{K \rightarrow \infty}R_{21} &\mapequal{w.p. 1} &  \frac{M}{2}\log{\left(P_R\right)} + {\cal O}(1).
\end{eqnarray*}
\end{proof}

{\it Discussion:} In this section, we first obtained an upper bound
on the capacity region of the non-coherent MIMO two-way relay channel
using multiple access cut-set bound when CSI is only known at $T_1$ and $T_2$.
Then with the help of a simple AF strategy we provided a
lower bound which is a ${\cal O}(1)$ term away from the upper bound in the high SNR regime.  We find that, contrary to the coherent case, with
the non-coherent MIMO two-way relay channel, as the number of relay nodes
grow large, the capacity region expression is independent of the number of
relays and no coherent combining gain (array gain) is available when
there is no CSI at any relay. Similar to the coherent case, however,
it turns out that even in the non-coherent case both $T_1$ and $T_2$ can
simultaneously transmit at a rate which is equal to the maximum
possible rate at which they could have transmitted when there is no
data flowing in the opposite direction. Therefore, the non-coherent
MIMO two-way relay channel creates two
orthogonal channels, one from $T_1 \rightarrow T_2$ and another from
$T_1 \rightarrow T_2$ with rate $\frac{M}{2}\log{P_R}$ achievable on each link
simultaneously, thereby removing the $\frac{1}{2}$ rate loss factor
because of half-duplex nodes.

The lower bound provided by Theorem \ref{noncoherent} shows that
the achievable rate for the non-coherent MIMO two-way relay channel
is same as the capacity of a point to point $M \times M$ i.i.d. Gaussian channel with receive SNR $P_R$,
with perfect CSI at receiver and no CSI at transmitter and where $\frac{1}{2}$ factor is due to the half-duplex
requirement.
This result is quite intuitive, since with absence of CSI at the relays,
as $K \rightarrow \infty$ the equivalent channel between
 $T_1 \rightarrow T_2$ ($T_2 \rightarrow T_1$) converges to
an $M \times M$ i.i.d. Gaussian channel and therefore
the result follows from \cite{tel}.

Compared to (Theorem 3 \cite{bol}), this result shows that
with the non-coherent MIMO two-way relay channel
it is possible to remove the $\frac{1}{2}$ rate loss factor due to the half-duplex
constraint and can achieve the same rate as promised by Theorem 3 \cite {bol}
(for unidirectional communication), in each direction $T_1 \rightarrow T_2$ and $T_2
\rightarrow T_1$. This is again due to the fact that, with perfect
CSI both $T_1$ and $T_2$ can cancel the
self interference terms their own transmitted signals generate and hence the
received signal at $T_2$ ($T_1$) when $T_2$ is also sending information
is equivalent to the received signal at $T_2$ in \cite{bol}, where there is no communication happening on
$T_2 \rightarrow T_1$ link.
Therefore there is a two-fold increase in achievable rate with
the non-coherent MIMO
two-way relay channel in comparison to \cite{bol}.

\section{Conclusion}
\label{conc} In this paper we developed capacity scaling laws for
 the MIMO two-way relay channel under coherent and non-coherent
assumptions. First we upper bounded the capacity region of the
coherent MIMO two-way relay channel
using the broadcast and multiple access cut-set
bound. Then we proposed a dual channel matching strategy to obtain an
achievable rate region for the coherent MIMO two-way relay channel.
The achievable rate region was shown to be
a ${\cal O}(1)$ term away from the upper bound, as $K \rightarrow
\infty$. Hence we characterized the coherent MIMO two-way relay channel
capacity region within a ${\cal O}(1)$ term as  $K \rightarrow
\infty$.

The dual channel matching strategy we proposed for the coherent MIMO
two-way relay channel is a decentralized strategy, where each relay
node does not cooperate with any other relay node and
only uses its CSI to coherently match the channels which the
streams from $T_1$ and $T_2$ experience. An interesting outcome of
our analysis is that the dual channel matching strategy, which
requires no cooperation between relays, achieves the capacity region
upper bound which allows for full cooperation between relays, within
a ${\cal O}(1)$ term. Thus, dual channel matching not only
simplifies the practical protocol design, but also achieves capacity
region upper bound within a ${\cal O}(1)$ term.

For the coherent MIMO two-way relay channel, there is a strict
requirement that all the nodes need to know perfect CSI, which in practice can
be quite challenging and resource consuming. Therefore we also
considered the case when only $T_1$ and $T_2$ have perfect receive
CSI and none of the relays have any CSI, which is referred to as the
non-coherent MIMO two-way relay channel. For this case we upper bounded
the capacity region using only the multiple access cut-set bound and
fixing $\alpha = \frac{1}{2}$ (i.e. $T_1, \ T_2$ and all the relays
transmit for equal amount of time in each time slot). Then with the
help of a simple AF strategy, we showed that in the high SNR regime
the upper bound is achievable within a ${\cal O}(1)$ term, and hence
characterize high SNR capacity region of the non-coherent MIMO two-way relay
channel. Thus, we showed that a very simple AF strategy which transmits
the power normalized version of the received signal is an optimal strategy.
The intuition behind this result is that by using AF with no CSI at any relay,
the effective channel from the source to destination $T_1\rightarrow T_2$ or $T_2\rightarrow T_1$ converges to an $M \times M$ i.i.d. MIMO Gaussian channel
as $K\rightarrow \infty$, which is similar to the effective channel considered
for the capacity region upper bound (using the multiple access cut). The
upper and lower bound differ by ${\cal O}(1)$ term because of the forwarded
noise from each relay node.

Compared to \cite{gast, bol}, our capacity scaling results
for the coherent and non-coherent MIMO two-way relay channel
shows that with the MIMO two-way relay channel there is a two-fold increase
in the capacity than unidirectional communication with large number of relays.
Hence, the MIMO two-way relay channel helps in improving the
spectral efficiency and unidirectional
full-duplex performance while using half-duplex terminals.

\end{document}